\newcommand{\zost}{\{0,1\}^*}
\renewcommand{\b}[1]{\b{#1}}
\renewcommand{\b}[1]{\boldsymbol{#1}}
\renewcommand{\ss}{\Sigma^*}
\newcommand{\aj}{A^{(j)}}
\newcommand{\asj}{A_x^{(j)}}
\newcommand{\zp}{\mathbb{Z}^+}
\newcommand{\ssqu}{\ss \! /\! \approx}
\newcommand{\ssqus}{\ss \! / \approx}
\newcommand{\mybmod}{\!\bmod\!}
\newcommand{\zo}{\{0,1\}}
\newlength{\arrow}
\newcommand*{\N}{\mathbb{N}}
\renewcommand{\b}[1]{\b{#1}}
\renewcommand{\b}[1]{\boldsymbol{#1}}
\newcommand{\npref}{\sqsubsetneqq}
\spnewtheorem{constr}[theorem]{Construction}{\bfseries}{\itshape}
\spnewtheorem{observation}[theorem]{Observation}{\bfseries}{\itshape}
\spnewtheorem*{notation}{Notation}{\bfseries}{\itshape}
\DeclareMathOperator{\GAPS}{GAPS}
\DeclareMathOperator{\UE}{UE}
\DeclareMathOperator{\spp}{sp}
\DeclarePairedDelimiter{\abs}{\lvert}{\rvert}
\numberwithin{equation}{section}
\newcommand*{\mcup}{\medmath\bigcup}
\begin{document}

\title{Nonregularity via Ordinal Extensions}

\author{Jack H. Lutz \protect\footnote[1]{Research supported in part by National Science Foundation grants 1545028 and 1900716} \and Giora Slutzki}

\authorrunning{Jack H. Lutz and Giora Slutzki}

\institute{Department of Computer Science\\ Iowa State University, Ames, IA 50011, USA \\
\email{\{lutz,slutzki\}@iastate.edu}\\
}

\maketitle

\begin{abstract}
We present a simple new method for proving that languages are not regular. We prove the correctness of the method, illustrate the ease of using the method on well-known examples of nonregular languages, and prove two additional theorems on the power and limitations of the method.

\keywords{Finite automata \and Formal language theory  \and Regular languages.}
\end{abstract}

\section{Introduction}
One of the most fundamental questions that can arise about a language (i.e., decision problem) $A \subseteq \Sigma^*$ is whether $A$ is \textit{regular}, that is, whether membership in $A$ can be decided by a finite-state automaton. In the years since Kleene \cite{Klee51} introduced regular languages, several methods for proving that languages are not regular have been developed. Such methods, which have recently been surveyed by Frishberg and Gasarch \cite{FrGa18}, are of both scientific and pedagogical interest.

This paper presents the \textit{ordinal extension nonregularity method}, an extremely simple method for proving that a language $A \subseteq \Sigma^*$ is nonregular. The term ``ordinal'' here is used as in ordinary language (``first'', ``second'', ``third'', etc.) and not as in transfinite set theory. Specifically, for a language $A \subseteq \Sigma^*$ and each positive integer $j \in \mathbb{Z}^+$, let $A^{(j)}$ be the set of all $j$th $A$-\textit{extensions}. That is, a string $y \in \Sigma^*$ is an element of $A^{(j)}$ if there exists a string $x \in \Sigma^*$ such that $y$ is, in a standard ordering of $\ss$, the $j$th string such that $xy \in A$. We say that $A$ has \textit{bounded ordinal extensions} if there exists $m \in \mathbb{Z}^+$ such that, for all $j \in \zp$, $|\aj| \leq m$. Otherwise, $A$ has \textit{unbounded ordinal extensions}. We say that $A$ has \textit{infinite ordinal extensions} if there exists $j \in \zp$ such that $|\aj| = \infty$. Every language with infinite ordinal extensions clearly has unbounded ordinal extensions.

Our main theorem is the \textit{ordinal extension nonregularity theorem}, which says that, if a language $A \subseteq \Sigma^*$ has unbounded ordinal extensions, then $A$ is not regular.

We prove our main theorem in section 2. In section 3 we demonstrate the ease of using the ordinal extension nonregularity method by applying it to several well-known examples of nonregular languages $A$. In many of these cases, it turns out to be very easy to show that $|\aj| = \infty$ for some small value of $j$. In section 4 we exhibit a nonregular language $A \subseteq \zost$ that has bounded ordinal extensions, thereby showing that the ordinal extension nonregularity method is not directly applicable to every nonregular language. In section 5 we exhibit a language $A \subseteq \zost$ that has unbounded ordinal extensions but not infinite ordinal extensions. We also give a lemma that sometimes enables an easy proof that a given language does not have infinite ordinal extensions. In section 6 we mention a few of the many open questions raised by our results.

\section{Ordinal Extension Nonregularity Theorem}

This section presents our main theorem.
Let $\Sigma$ be a nonempty, finite alphabet with a fixed ordering of elements. Let strings in $\Sigma^*$ be ordered first by length and, within each length, lexicographically according to the fixed ordering of $\Sigma$. For each set $A \subseteq \ss$ and each positive integer $j \in \zp$ not exceeding $|A|$, the $j^{\text{th}}$ element of $A$ is thus unambiguously defined. We write $\lambda$ for the empty string, the $1^{\text{st}}$ element of $\ss$.

For each $A \subseteq \ss$ and $x \in \ss$, we use the standard notation $$A_x = \{y \in \ss \mid xy \in A\}$$ for the set of all \textit{$A$-extensions} of $x$. For each $A \subseteq \ss$, $x \in \ss$, and $j \in \zp$, define the set $\asj$ as follows. If $j \leq |A_x|$, then $\asj = \{y\}$, where $y$ is the $j^{\text{th}}$ element of $A_x$. If $j > |A_x|$, then $A_x^{(j)} = \emptyset$. Note that $|\asj| \leq 1$ in any case.

For each $A, B \subseteq \ss$ and $j \in \zp$, let $$A_B^{(j)} = \bigcup_{x \in B} \asj$$ be the set of all $j^{th}$ $A$-extensions of elements of $B$, and let $$\aj = A_{\ss}^{(j)}$$ be the set of all $j^{th}$ $A$-extensions.

We say that a language $A \subseteq \ss$ has \textit{bounded ordinal extensions} if there exists $m \in \zp$ such that, for all $j \in \zp$, $|\aj| \leq m$. Otherwise, we say that $A$ has \textit{unbounded ordinal extensions}.

We say that a language $A \subseteq \ss$ has \textit{infinite ordinal extensions} if there exists $j \in \zp$ such that $|\aj| = \infty$. Clearly, a language with infinite ordinal extensions must have unbounded ordinal extensions.

Our main theorem may be proven in several ways. We first proved it using the Kolmogorov complexity nonregularity method of Li and Vitanyi \cite{LiVit95,LiVit19}. Here we prefer to prove it using the famous Myhill-Nerode theorem \cite{Nero58,oKoze97}, because this proof is quantitatively more informative. To this end, recall the following notation and terminology for an equivalence relation $\approx$ on $\ss$.

\begin{enumerate}
    \item The \textit{quotient} of $\ss$ by $\approx$ is the set $\Sigma^* / \approx$ of all $\approx$-equivalence classes.
    \item $\approx$ is \textit{right-invariant} if, for all $x,y,z \in \ss$, $x \approx y$ implies $xz \approx yz$.
    \item A language $A \subseteq \ss$ \textit{respects} $\approx$ if, for all $x,y \in \ss$, $x \approx y$ implies that $x \in A \Leftrightarrow y \in A$.
\end{enumerate}

    The Myhill-Nerode theorem implies that a language $A \subseteq \ss$ is regular if and only if $A$ respects some right-invariant equivalence relation $\approx$ on $\ss$ with $|\ssqu| < \infty$.
The following lemma is the crux of the proof of our main theorem.

\begin{lemma}\label{lem2.1} Let $\approx$ be a right-invariant equivalence relation on $\ss$, and let $A \subseteq \ss$. If $A$ respects $\approx$, then, for all $j \in \zp$, \begin{equation}\label{eq2.1} |\aj| \leq |\ssqu|. \end{equation}
\end{lemma}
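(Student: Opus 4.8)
The plan is to show that the extension set $A_x$ depends only on the $\approx$-class of $x$, and then simply to count. The whole lemma rests on one observation, and everything after it is bookkeeping.

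First I would establish the key claim: if $x \approx x'$, then $A_x = A_{x'}$. To see this, fix an arbitrary $z \in \ss$. Since $\approx$ is right-invariant and $x \approx x'$, we get $xz \approx x'z$. Because $A$ respects $\approx$, this yields $xz \in A \Leftrightarrow x'z \in A$, that is, $z \in A_x \Leftrightarrow z \in A_{x'}$. As $z$ was arbitrary, $A_x = A_{x'}$. The order in which right-invariance and ``respects'' are applied is the only thing to get right here, so I expect no real obstacle: the content of the lemma is precisely that $A_x$ is an invariant of the equivalence class of $x$.

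Next I would push this through to the ordinal extensions. Since $\asj$ is defined purely from the ordered set $A_x$ (it is either empty or the singleton whose element is the $j^{\text{th}}$ string of $A_x$), the claim gives $\asj = A_{x'}^{(j)}$ whenever $x \approx x'$. Hence, for each fixed $j$, the map $x \mapsto \asj$ is constant on each $\approx$-class and therefore factors through the quotient $\ssqu$.

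Finally I would bound the cardinality. Choosing one representative from each class lets me rewrite
$$\aj = \bigcup_{x \in \ss} \asj = \bigcup_{[x] \in \ssqu} \asj,$$
a union indexed by at most $|\ssqu|$ classes, each contributing a set of size at most one (recall $|\asj| \le 1$). Therefore $|\aj| \le |\ssqu|$, which is exactly \eqref{eq2.1}.
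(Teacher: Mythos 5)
Your proposal is correct and follows essentially the same route as the paper's proof: establish that $x \approx x'$ implies $A_x = A_{x'}$ (hence $\asj = A_{x'}^{(j)}$) via right-invariance and the respects condition, then write $\aj$ as a union over the at most $|\ssqus|$ equivalence classes, each contributing a set of size at most one. Your version merely spells out the first implication in slightly more detail than the paper does.
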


\begin{proof}
    Assume the hypothesis, and let $j \in \zp$. For all $w,x \in \ss$, we have $$w \approx x \implies A_w = A_x \implies A_w^{(j)} = A_x^{(j)},$$ the first implication holding because $\approx$ is righ   t-invariant and $A$ respects $\approx$. Hence, for each $\approx$-equivalence class $C \in \ssqu$ and each $x \in C$, $$A_C^{(j)} = A_x^{(j)}.$$ This implies that \begin{equation}\label{eq2.2} \abs{A_C^{(j)}} \leq 1
\end{equation} for each $C \in \ssqu$. We now have

\begin{equation*}
\begin{split}
\aj & = \bigcup_{x \in \ss} \asj \\
 & = \bigcup_{C \in \ssqus} \medmath{\bigcup_{x \in C} \asj}\\
 &= \bigcup_{C \in \ssqus} A_C^{(j)}
\end{split}
\end{equation*}
It follows by (\ref{eq2.2}) that (\ref{eq2.1}) holds. \qed
\end{proof}

We now present our main theorem.

\begin{theorem} (ordinal extension nonregularity theorem). If a language $A \subseteq \ss$ has unbounded ordinal extensions, then $A$ is not regular.

\end{theorem}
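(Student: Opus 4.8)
The plan is to prove the contrapositive and lean directly on Lemma~\ref{lem2.1} together with the Myhill-Nerode characterization recalled just above it. Suppose that $A \subseteq \ss$ is regular. By the Myhill-Nerode theorem, $A$ respects some right-invariant equivalence relation $\approx$ on $\ss$ with $|\ssqu| < \infty$. Set $m = |\ssqu|$, a fixed positive integer. Then Lemma~\ref{lem2.1} applies verbatim: for every $j \in \zp$ we have $|\aj| \leq |\ssqu| = m$. But this is precisely the statement that $A$ has bounded ordinal extensions. Hence every regular language has bounded ordinal extensions, and taking the contrapositive, any language with unbounded ordinal extensions is not regular.

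In more detail, the steps I would carry out are as follows. First, I would invoke regularity of $A$ to extract, via Myhill-Nerode, a right-invariant $\approx$ of finite index that $A$ respects; I would name the index $m := |\ssqu| \in \zp$. Second, I would check that the hypotheses of Lemma~\ref{lem2.1} are met---namely that $\approx$ is right-invariant and $A$ respects it---which is immediate from how $\approx$ was chosen. Third, I would apply the lemma to conclude $|\aj| \leq m$ for all $j \in \zp$, and observe that this is exactly the definition of bounded ordinal extensions. Finally, the contrapositive closes the argument.

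Honestly, there is no real obstacle here: the entire content has been front-loaded into Lemma~\ref{lem2.1}, and the theorem is little more than a restatement of that lemma through the Myhill-Nerode dictionary. The only point requiring a sentence of care is the logical packaging---the theorem is phrased as ``unbounded extensions $\Rightarrow$ nonregular,'' whereas the natural thing the lemma gives is ``regular $\Rightarrow$ bounded extensions,'' so I would make the contrapositive passage explicit rather than leave it implicit. I would also make sure to cite the finite-index half of Myhill-Nerode as already recalled in the text, so that the step ``regular implies finite-index $\approx$ that $A$ respects'' is justified rather than asserted.
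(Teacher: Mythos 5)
Your proposal is correct and is essentially the paper's own argument: both rest entirely on Lemma~\ref{lem2.1} combined with the Myhill--Nerode characterization, differing only in packaging (you take the contrapositive ``regular $\Rightarrow$ bounded,'' while the paper argues directly that any right-invariant $\approx$ respected by $A$ must have infinite index). No gap.
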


\begin{proof}
Let $A \subseteq \ss$ have unbounded ordinal extensions. To see that $A$ is not regular, let $\approx$ be a right-invariant equivalence relation on $\ss$ that is respected by $A$. By the Myhill-Nerode theorem, it suffices to show that \begin{equation}\label{eq2.3}|\ssqu| = \infty.\end{equation}

Let $m \in \zp$. Since $A$ has unbounded ordinal extensions, there exists $j \in \zp$ such that $|\aj| > m$. It follows by Lemma \ref{lem2.1} that $|\ssqu| > m$. Since $m$ is arbitrary here, this proves (\ref{eq2.3}).\qed

\end{proof}

\section{Example Applications}

This section applies the ordinal extension nonregularity method to some well-known nonregular languages $A$, all of which appear in the survey \cite{FrGa18}. In each of these cases, the method is easily applied by showing that $|\aj| = \infty$ for some small value of $j$.

\begin{example}
The language $B = \{0^n1^n \mid n \in \mathbb{N}\}$ has no infinite regular subset.
\end{example}
\begin{proof}
Let $A$ be an infinite subset of $B$. Then $\{1^n \mid 0^n1^n \in A\} \subseteq A^{(1)}$ (because each such $1^n$ is the first $A$-extension of $0^n$), so $A$ is not regular. \qed
\end{proof}

\begin{example}
The set $A = \{x \in \zost \mid x^R = x\}$ of all binary palindromes is not regular. (Here, $x^R$ is the \textit{reverse} of $x$, i.e., $x$ written backwards.)
\end{example}
\begin{proof}
Each $0^n$ is the first $A$-extension of $0^n1$, so $0^* \subseteq A^{(1)}.$ \qed
\end{proof}

\begin{example}
For each infinite set $I \subseteq \mathbb{N}$ and each $n \in I$, let $n^I$ be the least element of $I$ that is greater than $n$, and let $$\GAPS_I = \{n^I - n \mid n \in I\}.$$
If $\GAPS_I$ is infinite, then the language $B_I=\{0^n\mid n\in I\}$ has no infinite regular subset.
\end{example}
\begin{proof}
Assume that $\GAPS_I$ is infinite, and let $A$ be an infinite subset of $B_I$. Then $A=\{0^n \mid n \in J\}$ for some infinite $J \subseteq I$. Now $\{0^k \mid k \in \GAPS_J\} = \{ 0^{n^I -n} \mid n \in J\} \subseteq A^{(2)}$(because each such $0^{n^I - n}$ is the second $A$-extension of $0^n$, $\lambda$ being the first), and $\GAPS_J$ is infinite, so $|A^{(2)}| = \infty.$ \qed
\end{proof}

\begin{corollary}
The set $B = \{0^p \mid p \text{ is prime}\}$ has no infinite regular subset.
\end{corollary}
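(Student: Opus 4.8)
The plan is to derive this directly from Example 3 by taking $I \subseteq \N$ to be the set of all primes. With this choice, $B_I = \{0^p \mid p \text{ is prime}\} = B$ exactly, so Example 3 tells me that it suffices to verify its hypothesis, namely that $\GAPS_I$ is infinite. Here $\GAPS_I = \{p^I - p \mid p \in I\}$ is the set of gaps between consecutive primes.

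Since $\GAPS_I$ is a set of positive integers, it is infinite precisely when it is unbounded. I would therefore reduce the task to the purely number-theoretic statement that the gaps between consecutive primes can be made arbitrarily large. This is the only substantive ingredient, and I expect it to be the main (if entirely standard) point of the argument.

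To establish arbitrarily large gaps, I would invoke the classical factorial construction. Fix $n \in \zp$ and consider the $n$ consecutive integers $(n+1)! + 2, (n+1)! + 3, \ldots, (n+1)! + (n+1)$. For each $k \in \{2, \ldots, n+1\}$, the integer $(n+1)! + k$ is divisible by $k$ and strictly larger than $k$, hence composite, so this block contains no prime. Since $2$ is a prime lying below the block and, by Euclid, some prime lies above it, the largest prime $p \leq (n+1)! + 1$ and its successor $p^I$ bracket the block and satisfy $p^I - p \geq n+1 > n$. As $n \in \zp$ is arbitrary, $\GAPS_I$ contains arbitrarily large values and is therefore infinite.

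Having confirmed that $\GAPS_I$ is infinite, Example 3 immediately yields that $B$ has no infinite regular subset, completing the proof.
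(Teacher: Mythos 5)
Your proposal is correct and matches the paper's intent: the corollary is stated as an immediate consequence of Example 3 with $I$ taken to be the set of primes, the only substantive ingredient being that prime gaps are unbounded, which your factorial construction establishes correctly. No issues.
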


\begin{example}
The set $A = \{xx^{R}y\mid x,\ y \in \zo^+\}$ is not regular. (Here $\zo^+ = \zost \setminus \{\lambda\}$.)
\end{example}
\begin{proof}
$A^{(1)} \supseteq \{(10)^n0 \mid n \in \mathbb{N}\}$, because each $(10)^n0$ is the first $A$-extension of $(01)^n$.\qed
\end{proof}

\begin{example}
The language $$A = \{0^m1^n \mid m,n \in \zp \text{ and } gcd(m,n) = 1\}$$ is not regular.
\end{example}
\begin{proof}
For prime $p$, $1^p$ is the second $A$-extension of $0^{(p-1)!}$ ($1$ being the first), so $\{1^p \mid p \text{ is prime } \} \subseteq A^{(2)}$.\qed
\end{proof}

\section{Incompleteness of the Method}

In this section we exhibit a nonregular language $A \subseteq \{0\}^*$ that has bounded ordinal extensions. This proves that the converse of the ordinal extension nonregularity theorem does not hold, whence the ordinal extension nonregularity method is not directly applicable to every nonregular language.

\begin{constr}\label{constr4.1}
For each set $I \subseteq \mathbb{N}$, let $$I' = \{3i \mid i \in \mathbb{N}\} \cup \{3i+1 \mid i \in I\} \cup \{3i+2 \mid i \in \mathbb{N} \setminus I\},$$ and let $$A = A[I] = \{0^n \mid n \in I'\}.$$
\end{constr}

Throughout this section, $I'$ and $A$ are defined from $I$ as in Construction \ref{constr4.1}. Our main techincal lemma concerning this construction is the following.

\begin{lemma}\label{lem4.2}
For all $I \subseteq \mathbb{N}$ and all $j \in \zp$, \begin{equation}\label{eq4.1}|\aj| \leq 3.\end{equation}
\end{lemma}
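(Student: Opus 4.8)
The plan is to reduce the statement to a purely arithmetic count on $\mathbb{N}$ and then carry out a short case analysis. Since $A \subseteq \{0\}^*$, every string is a power of $0$, so for $x = 0^k$ we have $A_{0^k} = \{0^{m} : k+m \in I'\}$, and these extensions are ordered by length, i.e.\ by the exponent $m$. Hence $A_{0^k}^{(j)}$ is empty or equals $\{0^{d_j(k)}\}$, where $d_j(k) = n_j(k) - k$ and $n_j(k)$ is the $j$th element of $I'$ that is $\geq k$. Because $I'$ contains every multiple of $3$ it is infinite, so $n_j(k)$ always exists, and therefore $A^{(j)} = \{0^{d_j(k)} : k \in \mathbb{N}\}$. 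It thus suffices to prove that the set $\{d_j(k) : k \in \mathbb{N}\}$ has at most three elements.

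First I would record the block structure of $I'$. Partition $\mathbb{N}$ into the blocks $\{3i, 3i+1, 3i+2\}$. By construction each block contributes exactly two elements to $I'$: the multiple $3i$ always, together with exactly one of $3i+1$ (if $i \in I$) or $3i+2$ (if $i \notin I$); write $3i + a_i$ for this second element, so $a_i \in \{1,2\}$. Consequently, listing $I'$ in increasing order as $e_1 < e_2 < \cdots$, the $(2i+1)$th and $(2i+2)$th elements are $e_{2i+1} = 3i$ and $e_{2i+2} = 3i + a_i$. Setting $N(k) = |I' \cap [0,k)|$, the $j$th element of $I'$ at or above $k$ is $n_j(k) = e_{N(k)+j}$, so the whole problem comes down to computing $N(k)$. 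A direct count gives $N(3q) = 2q$ and $N(3q+1) = 2q+1$, while $N(3q+2) = 2q+2$ if $q \in I$ and $N(3q+2) = 2q+1$ if $q \notin I$.

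Then I would substitute these values into $d_j(k) = e_{N(k)+j} - k$, splitting on the residue of $k$ modulo $3$ and on the parity of $j$. Each substitution reduces, using $e_{2\ell+1} = 3\ell$ and $e_{2\ell+2} = 3\ell + a_\ell$, to an expression of the form $3s + c$ with $c$ a small constant plus possibly one surviving term $a_\ell \in \{1,2\}$. Collecting the outcomes over the three residues shows that for odd $j = 2s+1$ every value of $d_j(k)$ lies in $\{3s-1,\, 3s,\, 3s+1\}$, and for even $j = 2s$ every value lies in $\{3s-2,\, 3s-1,\, 3s\}$. In either case there are at most three possible values, which yields $|A^{(j)}| \leq 3$.

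The step I expect to be the main obstacle is the bookkeeping in this case analysis, for two reasons. The count $N(3q+2)$ branches on whether $q \in I$, so the residue-$2$ case itself splits, and the $a$-term that survives is evaluated at a shifted index (of the form $q+s$ rather than $q$), so one must check that its two possible values never escape the three-element window. One must also handle the boundary case $j = 1$ ($s = 0$) carefully: the formula would nominally allow the value $3s - 1 = -1$, but that value requires simultaneously $q \notin I$ and $a_q = 1$ (i.e.\ $q \in I$), a contradiction, so it never occurs and all extension lengths remain nonnegative, as they must.
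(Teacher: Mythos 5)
Your argument is correct, and it takes a genuinely different route from the paper's. The paper proves Lemma 4.2 by introducing a recursively defined family of (at most) three-element sets $B^j_m$ for $m\in\{0,1,2\}$ (Construction 4.4), establishing a battery of shift identities such as $B^j_m\cdot\{0^3\}=B^{j+2}_m$, and then showing by a double induction --- on $n$, with each base case $n\in\{0,1,2\}$ handled by its own induction on $j$, driven by the recurrences $a_n^{(j)}0=a_{n-1}^{(j)}$, $a_n^{(j)}0^3=a_{n-3}^{(j+2)}$, $a_n^{(j)}0^2=a_{n-2}^{(j+2)}$ of Observation A.4 --- that the $j$th $A$-extension of $0^n$ lies in $B^j_{n\bmod 3}$; this occupies the entire technical appendix. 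You instead compute everything in closed form: enumerating $I'$ as $e_{2i+1}=3i$, $e_{2i+2}=3i+a_i$ with $a_i\in\{1,2\}$, writing down the rank function $N(k)$ explicitly (with its branch at $k\equiv 2\pmod 3$ on whether $q\in I$), and reading off $d_j(k)=e_{N(k)+j}-k$ by a finite case split. I verified the case analysis: for $j=2s+1$ the values land in $\{3s-1,3s,3s+1\}$ and for $j=2s$ in $\{3s-2,3s-1,3s\}$, and your observation that the value $-1$ at $j=1$ would force the contradiction $q\notin I$ and $a_q=1$ is exactly right; your windows in fact coincide with the paper's $B_2^j$. Your approach buys brevity and an exact description of the window containing $A^{(j)}$, at the price of bookkeeping that must be done with care (the branching count $N(3q+2)$ and the shifted index $a_{q+s}$); the paper's recursive formulation is far longer but more modular, isolating each shift identity as a reusable lemma. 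One point to make explicit in a final write-up: the ambient alphabet here is $\{0\}$ (and if one instead works over $\{0,1\}$, any string containing a $1$ has $A_x=\emptyset$ and contributes nothing), so quantifying only over $x=0^k$ is indeed exhaustive.
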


Before proving Lemma~\ref{lem4.2}, we use it to prove the main result of this section.

\begin{theorem}\label{thm4.3}
There is a nonregular language that has bounded ordinal extensions.
\end{theorem}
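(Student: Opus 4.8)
The plan is to observe that Lemma~\ref{lem4.2} does all of the quantitative work: it tells us that \emph{every} language of the form $A = A[I]$ has bounded ordinal extensions, with the uniform bound $m = 3$. Hence, to prove the theorem it suffices to exhibit a single set $I \subseteq \mathbb{N}$ for which $A[I]$ fails to be regular. Since $A[I] \subseteq \{0\}^*$ is a unary language, I would establish nonregularity not through ordinal extensions (which are bounded here and therefore useless) but through the classical characterization that a unary language $\{0^n \mid n \in S\}$ is regular if and only if $S$ is eventually periodic, i.e.\ there exist $n_0$ and $p \in \zp$ with $n \in S \Leftrightarrow n + p \in S$ for all $n \geq n_0$. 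This characterization follows from Myhill--Nerode (or the pumping lemma) and may be cited or proved in a line.

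First I would make precise the sense in which $I'$ encodes $I$. Reading Construction~\ref{constr4.1} off block by block, for every $i \in \mathbb{N}$ we have $3i \in I'$ always, $3i + 1 \in I'$ exactly when $i \in I$, and $3i + 2 \in I'$ exactly when $i \notin I$. In particular, the restriction of the characteristic sequence of $I'$ to the positions congruent to $1 \bmod 3$ is precisely the characteristic sequence of $I$. I would use this to prove the key implication: if $I'$ is eventually periodic, then so is $I$. Indeed, if $I'$ has period $p$ beyond $n_0$, then it also has period $3p$ beyond $n_0$ (any multiple of a period is a period), and applying periodicity $3p$ to positions of the form $3i + 1$ gives $i \in I \Leftrightarrow i + p \in I$ for all sufficiently large $i$, which is eventual periodicity of $I$.

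It then remains only to choose $I$ not eventually periodic. By the contrapositive of the previous step, $I'$ is then not eventually periodic, so $A[I] = \{0^n \mid n \in I'\}$ is not regular, while Lemma~\ref{lem4.2} guarantees it has bounded ordinal extensions --- completing the proof. For concreteness I would take $I = \{2^k \mid k \in \mathbb{N}\}$: an infinite eventually periodic set contains an infinite arithmetic progression, whereas the gaps between consecutive powers of $2$ grow without bound, so this $I$ is not eventually periodic. (Alternatively, a cardinality argument suffices, since $I \mapsto A[I]$ is injective while only countably many languages are regular.)

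I do not expect any serious obstacle: once Lemma~\ref{lem4.2} is granted, the entire argument is elementary. The only point requiring a moment's care is the passage from periodicity of $I'$ to periodicity of $I$, where one must replace the period $p$ by $3p$ so that the periodicity relation respects the residue-class-$1$ positions that carry the encoded copy of $I$.
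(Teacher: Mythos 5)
Your proposal is correct, and it reaches the theorem by a genuinely different route for the nonregularity half. The paper's own proof is a two-line computability argument: take $I \subseteq \mathbb{N}$ undecidable; since $i \in I \Leftrightarrow 0^{3i+1} \in A[I]$, the language $A[I]$ is undecidable and hence not regular, and Lemma~\ref{lem4.2} supplies the bound. You instead invoke the characterization of unary regular languages as those whose exponent sets are eventually periodic, verify carefully that eventual periodicity of $I'$ (after passing to the period $3p$) forces eventual periodicity of $I$, and then choose the concrete set $I = \{2^k \mid k \in \mathbb{N}\}$, which contains no infinite arithmetic progression. Both arguments are sound and both lean on Lemma~\ref{lem4.2} for the quantitative content. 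What your version buys is an explicit, decidable --- indeed very simple --- witness: the paper's example is necessarily undecidable, whereas yours shows that even a computationally trivial language can be nonregular with bounded ordinal extensions, which is a mildly stronger and more informative instantiation of the theorem. The cost is the extra (routine but non-zero) work of establishing the eventually-periodic characterization and the transfer of periodicity from $I'$ to $I$; the paper's choice makes the nonregularity claim one word (``clearly''). Your parenthetical cardinality argument is also valid, since $I \mapsto A[I]$ is injective and only countably many languages are regular, though it sacrifices the explicitness that is the main advantage of your construction.
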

\begin{proof}
Let $I \subseteq \mathbb{N}$ be undecidable. Then $A = A[I]$ is clearly not regular. By Lemma \ref{lem4.2}, $A$ has bounded ordinal extensions.
\end{proof}

The rest of this section is devoted to proving Lemma \ref{lem4.2}. Our proof uses a two-parameter family of sets defined as follows.

\begin{constr}\label{constr4.4}
For each $m \in \{0,1,2\}$ and $j \in \zp$, define the set $B_m^j \subseteq \{0\}^*$ by the following three recursions on $j$.

\begin{enumerate}
\item[(0)]
\begin{flalign*}
  &B_0^1 = \{\lambda\} \tag{4.2}\\
  &B_0^j = \begin{cases}
  B_0^{j-1}\cdot\{0,0^2\} &\text{ if } j > 1 \text{ is even}\\
  B_0^{j-2}\cdot\{0^3\} &\text{ if } j > 1 \text{ is odd.}
  \end{cases}
  \tag{4.3}\\
\end{flalign*}
\item[(1)]\begin{flalign*}
  &B_1^1 = \{\lambda,\ 0\};\ \ B_1^2 = \{0^2\} \tag{4.4}\label{eq4.4}\\
  &B_1^j = \begin{cases}
  B_1^{j-2}\cdot\{0^3\} &\text{ if } j > 2 \text{ is even}\\
  B_1^{j-1}\cdot\{0,0^2\} &\text{ if } j > 2 \text{ is odd.}
  \end{cases}
  \tag{4.5}\label{eq4.5}\\
\end{flalign*}

\item[(2)]\begin{flalign*}
  &B_2^1 = \{\lambda,\ 0\};\ \ B_2^2 = \{0,0^2,0^3\} \tag{4.6}\label{eq4.6}\\
  &B_2^j = \begin{cases}
  B_2^{j-1}\cdot\{0^2\} &\text{ if } j > 2 \text{ is even}\\
  B_2^{j-1}\cdot\{0\} &\text{ if } j > 2 \text{ is odd.}
  \end{cases}
  \tag{4.7}\label{eq4.7}\\
\end{flalign*}
\end{enumerate}

Finally, for each $j \in \zp$, let

\begin{flalign*}
B^{(j)} = B_2^j. \tag{4.8} \label{eq4.8}
\end{flalign*}
\end{constr}

It is clear by inspection that

\begin{center}
\begin{align}
    |B^{(j)}| \leq 3 \tag{4.9}\label{eq4.9}
\end{align}\end{center}
holds for all $j \in \zp$. Hence, to prove Lemma \ref{lem4.2}, it suffices to prove the following.

\begin{lemma}\label{lem4.5}
For all $j \in \zp$, $A^{(j)} \subseteq B^{(j)}.$
\end{lemma}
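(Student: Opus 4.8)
The plan is to compute $A^{(j)}_{0^k}$ explicitly for every $k \in \N$, show that it lands inside $B^j_{k \bmod 3}$, and then use the inclusion $B_0^j \cup B_1^j \subseteq B_2^j$ to conclude $A^{(j)} = \bigcup_{k} A^{(j)}_{0^k} \subseteq B_2^j = B^{(j)}$. The first step is to record the structure of $I'$: by construction $I'$ contains every multiple of $3$ and, for each $i$, exactly one of $3i+1,3i+2$ (namely $3i+1$ if $i \in I$ and $3i+2$ otherwise). Writing $I' = \{s_0 < s_1 < \cdots\}$, this says precisely that $s_{2\ell} = 3\ell$ for all $\ell$, that $s_{2\ell+1} \in \{3\ell+1,3\ell+2\}$, and that each consecutive pair of gaps $(s_{2\ell+1}-s_{2\ell},\ s_{2\ell+2}-s_{2\ell+1})$ is $(1,2)$ or $(2,1)$ and hence sums to $3$. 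These are the only facts about $A$ the argument uses.

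Next I would reduce the extension sets to sums of gaps. Since $A \subseteq \{0\}^*$, only strings $0^k$ contribute, and the $A$-extensions of $0^k$ are exactly the $0^m$ with $k+m \in I'$. For fixed $k$, let $s_t$ be the least element of $I'$ with $s_t \geq k$ and put $r = s_t - k \in \{0,1\}$; then the $j$th extension (by length) has length $s_{t+j-1}-k = r + (s_{t+j-1}-s_t)$, so $A^{(j)}_{0^k} = \{0^{\,r + (s_{t+j-1}-s_t)}\}$ (a singleton, as $I'$ is infinite). The quantity $s_{t+j-1}-s_t$ is a sum of $j-1$ consecutive gaps; because every full block contributes exactly $3$, this value is determined up to the single-step ambiguity ($+1$ versus $+2$) coming from a partial block at either end, and the identity $s_{2\ell}=3\ell$ is exactly what pins down any endpoint falling on a multiple of $3$.

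I would then carry out this computation in the three cases $k \equiv 0,1,2 \pmod 3$, in each using $s_{2\ell}=3\ell$ to evaluate the even-indexed targets exactly and the block identity to control the odd-indexed ones. In parallel I would unfold the recursions (4.2)--(4.7) into closed forms, each by a short induction on $j$ from the base cases $j\in\{1,2\}$: $B_0^j$ equals $\{0^{3\ell}\}$ for $j=2\ell+1$ and $\{0^{3\ell+1},0^{3\ell+2}\}$ for $j=2\ell+2$; $B_1^j$ equals $\{0^{3\ell},0^{3\ell+1}\}$ for $j=2\ell+1$ and $\{0^{3\ell+2}\}$ for $j=2\ell+2$; and $B_2^j$ equals $\{0^{3\ell-2},0^{3\ell-1},0^{3\ell}\}$ for $j=2\ell$ and $\{0^{3\ell-1},0^{3\ell},0^{3\ell+1}\}$ for $j=2\ell+1$ (with $\ell\geq 1$; $j=1$ checked directly). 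Matching the two computations yields $A^{(j)}_{0^k} \subseteq B^j_{k \bmod 3}$, and $B_0^j \cup B_1^j \subseteq B_2^j$ is then immediate by inspection of the closed forms, completing the argument.

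The main obstacle is the case $k \equiv 2 \pmod 3$. Here a single value of $j$ receives contributions from two genuinely different kinds of starting positions: those with $k \in I'$ (i.e.\ $i \notin I$, where $r=0$ and the first gap is forced to be $1$) and those with $k \notin I'$ (i.e.\ $i \in I$, where $r=1$ and the first extension steps forward to $3i+3$). Neither kind alone fills out $B_2^j$; only their union does, which is exactly why $B_2^j$ carries three elements while $B_0^j$ and $B_1^j$ carry at most two. Getting the parity bookkeeping right---which endpoint is a partial block for even versus odd $j$, and how the offset $r$ combines with the forced boundary gap so that the two sub-cases align with the appropriate shift of $B_2^{j-1}$---is the delicate part; the residues $0$ and $1$ are comparatively mechanical, since for them the relevant positions collapse to a single inductive pattern.
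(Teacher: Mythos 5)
Your plan is correct, and it reaches the key intermediate claim by a genuinely different route from the paper. Both arguments reduce Lemma \ref{lem4.5} to showing that the $j$th $A$-extension of $0^k$ lies in $B^j_{k \bmod 3}$ and then invoke $B_0^j \cup B_1^j \subseteq B_2^j$; but the paper proves that claim by a double induction --- three separate inductions on $j$ for the base cases $k \in \{0,1,2\}$, followed by an induction on $k$ that steps down via recurrences of the form $a_n^{(j)}0 = a_{n-1}^{(j)}$, $a_n^{(j)}0^3 = a_{n-3}^{(j+2)}$, $a_n^{(j)}0^2 = a_{n-2}^{(j+2)}$, supported by several auxiliary lemmas on how the sets $B_m^j$ behave under concatenation with $0$, $0^2$, $0^3$. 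You instead compute both sides in closed form: the enumeration $I' = \{s_0 < s_1 < \cdots\}$ with $s_{2\ell} = 3\ell$ and paired gaps summing to $3$ determines $s_{t+j-1}-s_t$ exactly at even indices and up to a single-unit ambiguity at odd ones, and the recursions (4.2)--(4.7) unfold into the explicit sets you list. I verified those closed forms and the three residue cases, including the essential point that for $k \equiv 2 \pmod 3$ the sub-cases $i \in I$ (offset $r=1$) and $i \notin I$ (offset $r=0$, forced first gap $1$) together account for exactly the three elements of $B_2^j$, which is why that family needs three elements while $B_0^j$ and $B_1^j$ need at most two. Your route buys transparency --- one sees what the $B_m^j$ actually are and why the extensions land in them, with no induction on $k$ and none of the paper's auxiliary concatenation lemmas --- at the cost of the four-way parity-and-offset bookkeeping you correctly identify as the delicate part. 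What remains is only to write that bookkeeping out explicitly; no step of the plan would fail.
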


The proof of Lemma \ref{lem4.5} appears in the Optional Technical Appendix.

\section{Unbounded versus Infinite Ordinal Extensions}
The main objective of this section is to establish the existence of languages that have unbounded, but not infinite, ordinal extensions.

\begin{constr}\label{constr5.1}
For each $j \in \mathbb{N}$, let $$t(j) = \sum_{k=1}^j k = \binom{j+1}{2}$$ be the $j^{\text{th}}$ triangular number. For each $n \in \zp$, let $t^{-1}(n)$ be the unique $j \in \mathbb{N}$ such that \begin{equation}\label{eq5.1} t(j-1) < n \leq t(j). \end{equation}

For each $n,m,k \in \zp$, define the strings $$x_n = 0^n1, \hspace*{5mm} y(m,k) = x_mx_k.$$ Define the languages $$B = \{x_ny(t^{-1}(n),n-t(t^{-1}(n)-1)) \mid n \in \zp\},$$ $$C =\{x_n y(m,1) \mid n,m \in \zp \text{ and } m < t^{-1}(n)\},$$ and let $$A = B \cup C.$$

\end{constr}

\begin{theorem}\label{thm5.2} The language $A$ of Construction \ref{constr5.1} has unbounded ordinal extensions, but not infinite ordinal extensions.
\end{theorem}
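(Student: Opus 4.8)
The plan is to establish the two halves of the statement separately, exploiting the fact that every element of the language $A$ from Construction~\ref{constr5.1} has the three-block form $x_p x_q x_s = 0^p 1\, 0^q 1\, 0^s 1$, and that no such string is a proper prefix of another (each has exactly three $1$'s and ends in $1$). Consequently, for any $x \in \ss$ the set $A_x$ is completely determined by where $x$ falls in the pattern $0^*1\,0^*1\,0^*1$, and this is the structure I would exploit throughout.

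For the \emph{unbounded} direction I would fix $j \in \zp$ and look only at the first-block prefixes $x_n = 0^n 1$ with $t^{-1}(n) = j$, i.e. $t(j-1) < n \le t(j)$. For such an $n$ the strings $z$ with $x_n z \in A$ are exactly the $C$-extensions $y(m,1)$ with $1 \le m < j$ together with the single $B$-extension $y\bigl(j,\, n - t(j-1)\bigr)$. Ordering $\ss$ by length first shows this $B$-extension, of length $j + \bigl(n - t(j-1)\bigr) + 2$, is strictly longer than every $C$-extension, so it is the $j$-th element of $A_{x_n}$ and hence lies in $\aj$. As $n$ runs through $\{t(j-1)+1,\dots,t(j)\}$ the index $n - t(j-1)$ runs through $\{1,\dots,j\}$, producing $j$ distinct members of $\aj$; thus $|\aj| \ge j$ for all $j$, and $A$ has unbounded ordinal extensions.

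For the \emph{not infinite} direction I would fix $j$ and bound $|\aj|$ by classifying every prefix $x$ with $A_x \neq \emptyset$ by its position in the pattern: $x = 0^i$, $x = x_p 0^i$ (reaching into the second block), $x = x_p x_q 0^i$ (reaching into the third block), or $x$ a complete element of $A$, with $i \ge 0$ throughout. For each type I would compute $A_x$ from the definitions of $B$ and $C$ and locate its $j$-th element under the length-lexicographic order. The transparent cases are the complete elements, where $A_x = \{\lambda\}$, and the first-block boundaries $x_p$ (the case $i=0$ of the second type): there the universal $C$-extensions $y(m,1)$ occupy the ordinals $1,\dots,t^{-1}(p)-1$ and the single information-carrying $B$-extension sits at ordinal $t^{-1}(p)$, so holding $j$ fixed exposes only the finitely many strings $y(j,r)$ with $1 \le r \le j$.

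The main obstacle is the remaining prefixes. Each $A_{x_p 0^i}$ and $A_{x_p x_q 0^i}$ is finite (at most $t^{-1}(p)$ strings share the first block $x_p$), while $A_{0^i}$ is infinite, but in every case the danger is identical: as the prefix varies the $j$-th element could a priori sweep out infinitely many strings, because a prefix cutting deep into the second block, or ending exactly at a second-block boundary $x_p x_q$ with $q = t^{-1}(p)$, can place the position-encoding third block $x_{p-t(q-1)}$ at a small ordinal. The crux is therefore to prove that the triangular indexing $n \mapsto \bigl(t^{-1}(n),\, n - t(t^{-1}(n)-1)\bigr)$ keeps every position-encoding extension at an ordinal that grows with the level $t^{-1}(n)$, so that no fixed $j$ reaches infinitely many of them; to make this rigorous I would, in the manner of Construction~\ref{constr4.4} and Lemma~\ref{lem4.5}, introduce auxiliary families of sets that majorize the $A_x$ type-by-type and then prove by induction on $j$ that $\aj$ is contained in a finite, explicitly described set. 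I expect the interaction, in the length ordering, between the leftover suffix of a partially consumed second block and the third block—that is, the $x_p 0^i$ case—to be the most delicate point, and the step on which the finiteness of each $\aj$ ultimately rests.
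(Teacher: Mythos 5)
Your first half (unbounded ordinal extensions) is complete and is essentially the paper's argument: for $t(j-1) < n \le t(j)$ the lone $B$-extension $y(j,\,n-t(j-1))$ of $x_n$ is longer than all $j-1$ of its $C$-extensions, hence is the $j$th element of $A_{x_n}$, and letting $n$ sweep the $j$th level yields $j$ distinct members of $A^{(j)}$. The second half, however, is a plan rather than a proof. You set up essentially the same case analysis the paper uses (it partitions $\{0,1\}^*$ into $\{0\}^* \cup A \cup X \cup Y \cup Z$ and bounds the contribution of each piece to $\aj$), you dispose of the easy pieces, and you correctly isolate the dangerous one --- prefixes $x_p0^i$ with $i \ge 1$ reaching into the second block --- but there you only announce that you ``would'' build majorizing families and induct on $j$. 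Nothing is constructed and no bound is proved, so the finiteness of $\aj$ is not established.

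Worse, the gap cannot be closed along the route you sketch, because the claim you identify as the crux --- that the triangular indexing keeps every position-encoding third block at an ordinal that grows with the level --- is false for the language as constructed. Fix $j' \in \zp$, let $n = t(j')$, so that $t^{-1}(n) = j'$ and $n - t(j'-1) = j'$, and let $x = x_n 0^{j'}$. Any $w \in A$ extending $x$ has first block $x_n$ (prefix-freeness of $X$) and second-block index at least $j'$; since the $C$-strings over $x_n$ have second-block index $m < j'$, the only such $w$ is the $B$-string $x_n x_{j'} x_{j'}$. Hence $A_x = \{10^{j'}1\}$, so $10^{j'}1$ is the \emph{first} $A$-extension of $x$, and letting $j'$ range over $\zp$ puts infinitely many distinct strings into $A^{(1)}$. (These $x$ lie in the paper's set $Y$, and the same example conflicts with the paper's unproved inclusion (\ref{eqA.x+6}): no string of the form $0^a10^b1$ is a proper prefix of any $x_rx_s$, so $10^{j'}1 \notin D$.) Any correct completion would therefore have to begin by modifying Construction \ref{constr5.1} so that a prefix consuming the zeros of the second block cannot promote the $B$-extension to a bounded ordinal; as it stands, the finiteness claim fails at exactly the step you flagged as delicate.
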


\begin{proof}
Throughout the proof we use the fact that $$X = \{x_n \mid n \in \zp\}$$ is a prefix set, i.e., that no element of $X$ is a prefix of another element of $X$.

For each $n \in \zp$, if we let $j = t^{-1}(n)$, then the $A$-extensions of $x_n$, in order, are \begin{equation}\label{eqA.x}
y(1,1),...,y(j-1,1),y(j,n-t(j-1)).
\end{equation}
(The first $j-1$ of these are $C$-extensions of $x_n$ and the last is a $B$-extension of $x_n$.)

To see that $A$ has unbounded ordinal extensions, let $k \in \zp$. For each $1 \leq i \leq j$, (\ref{eqA.x}) tells us that $y(j,i)$ is the $j^{\text{th}}$ $A$-extension of $x_n$ where $n = t(j-1)+i$. Since $y(j,1),...,y(j,j)$ are distinct, it follows that $|\aj| \geq j$. Since $j$ is arbitrary here, this confirms that $A$ has unbounded ordinal extensions.

To see that $A$ does not have infinite ordinal extensions, let $j \in \zp$. It suffices to show that \begin{equation}\label{eqA.x1}|\aj| < \infty. \end{equation}Recall that $X = \{x_n \mid n \in \zp\}$, and define the sets \begin{equation} Y = \{x \in \{0,1\}^* \mid \text{ there exist } n\in\zp \text{ and } w \in A \text{ such that } x_n \npref x \npref w\},  \notag \end{equation}
\begin{equation} Z = \{x \in \{0,1\}^* \mid A_x = \emptyset\}.  \notag \end{equation} It is clear that $$\{0\}^* \cup A \cup X \cup Y \cup Z = \{0,1\}^*,$$ whence \begin{equation}\label{eqA.x+2} \aj = \aj_{\{0\}^*} \bigcup \aj_A \cup \aj_X \cup \aj_Y \cup \aj_Z. \end{equation}

We examine the five sets in this union in turn.

\begin{enumerate}
    \item[(i)] The $j^{\text{th}}$ extension of \emph{any} string $x \in \{0\}^*$ is $0^{j-1}1y(1,1)$, so \begin{equation}\label{eqA.x+3} \aj_{\{0\}^*} = \{0^{j-1}1y(1,1)\}.
    \end{equation}
    \item[(ii)]\begin{equation}\label{eqA.X+4}
        \aj_A = \begin{cases}
        \{\lambda\} &\text{if } j=1\\
        \emptyset &\text{if } j>1.
        \end{cases}
    \end{equation}
    \item[(iii)] By (\ref{eqA.x}) we have \begin{equation}\label{eqA.x+5}\aj_X = \{y(j,i) \mid 1\leq i \leq j\}.
    \end{equation}
    \item[(iv)] It is clear that \begin{equation}\label{eqA.x+6}
     \aj_Y \subseteq D,
    \end{equation} where $D$ is the set of all $u \in \{0,1\}^*$ for which there exist $r,s \in \{1,...,j\}$ such that $\lambda \npref u \npref x_rx_s.$
    \item[(v)]
    Trivially, \begin{equation} \label{eqA.x+7}
    \aj_Z = \emptyset.
    \end{equation}
\end{enumerate}

Since the sets on the right-hand sides of (\ref{eqA.x+3}), (\ref{eqA.X+4}), (\ref{eqA.x+5}), (\ref{eqA.x+6}), and (\ref{eqA.x+7}) are all finite, (\ref{eqA.x+2}) tells us that (\ref{eqA.x1}) holds.  \qed

\end{proof}

We conclude this section with a lemma that sometimes enables an easy proof that a given language does not have infinite ordinal extensions.

Given a language $A \subseteq \ss$, call a string $y \in \ss$ a \textit{universal $A$-extension}, and write $y \in \UE(A)$, if, for every $x \in \ss$, $xy \in A$.

\begin{lemma}\label{lem5.3} Let $A \subseteq \ss$ and $j \in \zp$. If $\abs{\UE(A)} \geq j$, then $|\aj| < \infty$.

\end{lemma}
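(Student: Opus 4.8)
The plan is to exploit the fact that the standard ordering of $\ss$ (by length, then lexicographically) is a well-ordering of order type $\omega$, so that every initial segment $\{y \in \ss \mid y \leq u\}$ is finite. It then suffices to show that $\aj$ is contained in one such initial segment. First I would let $U = \UE(A)$ and invoke the hypothesis $\abs{U} \geq j$ to fix the $j^{\text{th}}$ element $u$ of $U$ in the standard ordering; this element exists precisely because $U$ has at least $j$ members.

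The key observation is that, by the definition of a universal $A$-extension, every $y \in U$ satisfies $xy \in A$ for all $x \in \ss$, that is, $y \in A_x$ for all $x$. Hence $U \subseteq A_x$ for every $x \in \ss$. In particular each $A_x$ has at least $j$ elements, so $\asj$ is nonempty and equals $\{z_x\}$, where $z_x$ is the $j^{\text{th}}$ element of $A_x$. I would then bound each $z_x$ by $u$: since $A_x \supseteq U$, the first $j$ elements $u_1 < \cdots < u_j = u$ of $U$ are $j$ elements of $A_x$, all at most $u$, so the $j^{\text{th}}$ smallest element of $A_x$ satisfies $z_x \leq u$. Consequently $\aj = \bigcup_{x \in \ss} \asj \subseteq \{y \in \ss \mid y \leq u\}$, and the latter set is finite, giving $\abs{\aj} < \infty$.

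The only step requiring a word of care is the monotonicity principle just used: if $S \subseteq T \subseteq \ss$ and both have at least $j$ elements, then the $j^{\text{th}}$ smallest element of $T$ is at most the $j^{\text{th}}$ smallest element of $S$. This holds because the $j$ smallest elements of $S$ are all present in $T$ and are all bounded by the $j^{\text{th}}$ element of $S$, so $T$ already has $j$ elements at or below that bound. I expect this to be the main (and essentially only) obstacle, and it is entirely routine; the rest of the argument is a direct unwinding of the definitions of $\UE(A)$ and $\aj$.
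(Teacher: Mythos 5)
Your proof is correct and follows essentially the same route as the paper's: both fix the $j^{\text{th}}$ universal $A$-extension $u$ and argue that every $A_x$ contains at least $j$ elements bounded by $u$, so $\aj$ lies in the finite initial segment up to $u$. Your explicit statement of the monotonicity principle is a minor elaboration of a step the paper leaves implicit.
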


\begin{proof}
Assume the hypothesis. Let $z$ be the $j^{\text{th}}$ element of $\UE(A)$, and let $S$ be the set of all strings in $\ss$ up to and including $z$. It suffices to show that $\aj \subseteq S$. For this, let $y \in \aj$. Then there exists $x \in \ss$ such that $y$ is the $j^{\text{th}}$ $A$-extension of $x$. Since $S$ contains at least $j$ $A$-extensions of $x$ (namely, the first $j$ universal $A$-extensions), it follows that $y \in S$. \qed
\end{proof}

\begin{corollary}\label{cor5.4}
Let $A \subseteq \ss$. If $\UE(A)$ is infinite, then $A$ does not have infinite ordinal extensions.
\end{corollary}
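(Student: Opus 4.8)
The plan is to derive the corollary directly from Lemma~\ref{lem5.3} by letting $j$ range over all of $\zp$. First I would recall that, by definition, $A$ has infinite ordinal extensions precisely when there exists some $j \in \zp$ with $|\aj| = \infty$. Hence it suffices to establish the complementary statement that $|\aj| < \infty$ for \emph{every} $j \in \zp$.

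Next, I would fix an arbitrary $j \in \zp$ and argue as follows. Since $\UE(A)$ is infinite, it in particular contains at least $j$ elements, so $\abs{\UE(A)} \geq j$. This is exactly the hypothesis of Lemma~\ref{lem5.3}, which therefore applies and yields $|\aj| < \infty$. Because $j$ was arbitrary, the bound $|\aj| < \infty$ holds for all $j \in \zp$, which is precisely the assertion that $A$ does not have infinite ordinal extensions.

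The one point worth stating explicitly is why the single hypothesis suffices to cover all $j$ at once: infinitude of $\UE(A)$ guarantees $\abs{\UE(A)} \geq j$ simultaneously for every positive integer $j$, so Lemma~\ref{lem5.3} can be invoked uniformly. There is no genuine obstacle here; the corollary is an immediate consequence of the lemma, obtained simply by universally quantifying over the index $j$ and noting that an infinite set exceeds every finite cardinality.
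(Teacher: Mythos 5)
Your proof is correct and follows exactly the intended derivation that the paper leaves implicit: since $\UE(A)$ infinite gives $\abs{\UE(A)} \geq j$ for every $j \in \zp$, Lemma~\ref{lem5.3} applies uniformly to yield $|\aj| < \infty$ for all $j$, which is the negation of having infinite ordinal extensions. This matches the paper's approach.
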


The language $A$ of the following example was introduced by Kamae and Weiss \cite{KamWei75} in connection with the theory of normal numbers.

\begin{example}\label{ex5.5}
The language $$A = \{u110^n10^n \mid u \in \zost \text{ and } n \in \zp\}$$ does not have infinite ordinal extensions.
\end{example}
\begin{proof}
For all $n \in \zp$, $(11010)^n \in \UE(A)$, so this follows immediately from Corollary \ref{cor5.4}.
\end{proof}

\section{Conclusion}

We have shown that the ordinal extension nonregularity theorem gives a very convenient method for proving the nonregularity of many languages. In the first author's experience, students have been more successful at giving rigorous proofs of nonregularity using this method than using other methods such as the pumping lemma. It would be interesting to see a systematic study comparing the pedagogical efficacies of such methods.

Pedagogical matters aside, our work suggests a number of open questions. We discuss just a few of them here.

The Kamae-Weiss language $A$ of Example \ref{ex5.5} is nonregular and does not have infinite ordinal extensions. It is thus \textit{either} a language of the type shown to exist by Theorem \ref{thm4.3} \textit{or} a language of the type shown to exist by Theorem \ref{thm5.2}. At the time of this writing, we do not know which of these two alternatives is the case.

Define the \textit{ordinal extension spectrum} of a language $A \subseteq \ss$ to be the function $\spp_{A}:\zp \rightarrow \N \cup \{\infty\}$ defined by $$\spp_{A}(j) = |\aj|$$ for $j\in\zp$. Which functions $f:\zp \rightarrow \N\cup\{\infty\}$ are ordinal extension spectra of languages?

Most importantly, can variants of the ordinal extension nonregularity method be developed to prove that languages lie outside of other significant classes such as CFLs or DCFLs?

\bibliographystyle{splncs04}
\bibliography{master.bib}
\appendix

\newpage
\section{Optional Technical Appendix}

This appendix is devoted to proving Lemma \ref{lem4.5}. For this we first develop useful properties of Construction \ref{constr4.4}.

\begin{lemma}\label{lemA.1}
\begin{flalign*}
     \hspace{3.2cm}&1.\text{For all } j \in \zp,\text{ } B_0^j \cdot \{0^3\} = B_0^{j+2}.& \tag{A.1} \label{A.1}\\
     \hspace{3.2cm}&2.\text{For all } j \in \zp,\text{ } B_1^j \cdot \{0^3\} = B_1^{j+2}.&\tag{A.2} \label{A.2}\\
     \hspace{3.2cm}&3.\text{For all } 2 \leq j \in \zp, \text{ } B_2^j \cdot \{0^3\} = B_2^{j+2}.& \tag{A.3} \label{A.3}
\end{flalign*}
\end{lemma}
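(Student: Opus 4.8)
The plan is to prove each of the three identities by a case analysis on the parity of $j$, unfolding the defining recursion of Construction~\ref{constr4.4} at most twice and then rearranging the appended factors. The single structural fact that makes every case collapse is that concatenation of subsets of $\{0\}^*$ is commutative: since $0^a \cdot 0^b = 0^{a+b} = 0^b \cdot 0^a$, for any $S,T \subseteq \{0\}^*$ we have $S \cdot T = T \cdot S$, and in particular $\{0\}\cdot\{0^2\} = \{0^2\}\cdot\{0\} = \{0^3\}$. I would record this commutativity observation at the outset, since it is the only nontrivial ingredient and it is used in every part.

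For part~1 I split on the parity of $j$. When $j$ is odd, $j+2$ is odd and exceeds $1$, so the odd clause of (4.3) gives $B_0^{j+2} = B_0^j \cdot \{0^3\}$ in one step (this even subsumes $j=1$). When $j$ is even, hence $j \geq 2$, I unfold $B_0^{j+2} = B_0^{j+1}\cdot\{0,0^2\}$ and then $B_0^{j+1} = B_0^{j-1}\cdot\{0^3\}$; comparing with $B_0^j = B_0^{j-1}\cdot\{0,0^2\}$ and sliding the $\{0^3\}$ factor across the $\{0,0^2\}$ factor by commutativity yields the claim. Part~2 is the mirror image: now the even case reduces in one step via (4.5), while the odd case $j \geq 3$ unfolds through $j+1$ exactly as above. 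The only separate check is the base case $j=1$, where the recursive clause is unavailable at index $1$ and I instead verify $B_1^1\cdot\{0^3\} = \{0^3,0^4\} = \{0^2\}\cdot\{0,0^2\} = B_1^3$ directly.

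For part~3 the recursion appends $\{0\}$ or $\{0^2\}$ rather than dropping the index by two, so both parities use a two-step unfold. For even $j \geq 2$ I compute $B_2^{j+2} = B_2^{j+1}\cdot\{0^2\} = B_2^j\cdot\{0\}\cdot\{0^2\} = B_2^j\cdot\{0^3\}$, and for odd $j \geq 3$ I compute $B_2^{j+2} = B_2^{j+1}\cdot\{0\} = B_2^j\cdot\{0^2\}\cdot\{0\} = B_2^j\cdot\{0^3\}$. Both computations require the indices $j+1$ and $j+2$ to exceed $2$, which is exactly what the hypothesis $2 \leq j$ guarantees; at $j=1$ the appeal to $B_2^2$ would instead invoke the base definition (4.6), and indeed the identity fails there, so the restriction in the statement is precisely what the argument demands.

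I do not expect a serious obstacle, as the content is essentially bookkeeping of parities and index bounds. The two points that require genuine care are (i) confirming at each step that the side condition of the recursive clause being invoked is met (namely $j>1$ in (4.3) and $j>2$ in (4.5) and (4.7)), so that no recursion is silently applied to a base case; and (ii) remembering to appeal to commutativity whenever the two unfoldings deposit the factors $\{0,0^2\}$ and $\{0^3\}$ (respectively $\{0\}$ and $\{0^2\}$) in the opposite order from the one in which they appear in the target expression.
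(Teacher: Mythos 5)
Your proposal is correct and follows essentially the same route as the paper's proof: split on the parity of $j$, unfold the recursions of Construction~\ref{constr4.4} one or two levels, and commute the unary factors $\{0\},\{0^2\},\{0^3\}$ to match the target. If anything you are slightly more thorough than the paper, which writes out only the odd case of part~3 and dismisses part~2 as ``exactly like~1'' without noting that the $j=1$ instance of (\ref{A.2}) needs the separate base-case check $B_1^1\cdot\{0^3\}=\{0^3,0^4\}=B_1^2\cdot\{0,0^2\}=B_1^3$ that you supply.
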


\begin{proof}
1. If $j$ is odd then (\ref{lemA.1}) follows immediately from (4.3). If $j$ is even, then three applications of (4.3) tell us that $$B_0^{j+2} = B_0^{j+1}\cdot\{0,0^2\} = B_0^{j-1}\cdot\{0^4,0^5\} = B_0^j\cdot\{0^3\}. $$

2. This is exactly like $1$ above, but using (4.4) and (4.5) in place of (4.3).

3. If $j$ is odd, then two applications of (4.7) tell us that $$B_2^{j+2} = B_2^{j+1} \cdot \{0\} = B_2^j \cdot \{0^3\}.$$ \qed
\end{proof}

\begin{lemma}\label{lemA.2} For all $j \in \zp$, $$B_0^j \cup B_1^j \subseteq B_2^j.$$
\end{lemma}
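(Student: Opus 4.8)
The plan is to prove $B_0^j \cup B_1^j \subseteq B_2^j$ by induction on $j$, advancing in steps of two rather than one. The reason for the step size is Lemma~\ref{lemA.1}: each of its three parts says that the family $B_m$ passes from index $j$ to index $j+2$ simply by concatenating $\{0^3\}$ on the right. Since concatenation on the right distributes over union, any inclusion established at level $j$ will transport verbatim to level $j+2$, so the whole argument reduces to checking finitely many base cases and then invoking Lemma~\ref{lemA.1} once.

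First I would settle the base cases $j \in \{1,2,3\}$ by unfolding the defining equations (4.2)--(4.7) of Construction~\ref{constr4.4}. Concretely, $B_0^1 \cup B_1^1 = \{\lambda,0\} = B_2^1$; $B_0^2 \cup B_1^2 = \{0,0^2\} \subseteq \{0,0^2,0^3\} = B_2^2$; and $B_0^3 \cup B_1^3 = \{0^3,0^4\} \subseteq \{0^2,0^3,0^4\} = B_2^3$. Each of these is a one-line computation from the recursions, so I expect no difficulty here.

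For the inductive step I would fix $j \geq 2$, assume $B_0^j \cup B_1^j \subseteq B_2^j$, and compute
\[
B_0^{j+2} \cup B_1^{j+2} = \bigl(B_0^j \cup B_1^j\bigr)\cdot\{0^3\} \subseteq B_2^j\cdot\{0^3\} = B_2^{j+2},
\]
where the first equality combines parts (\ref{A.1}) and (\ref{A.2}) of Lemma~\ref{lemA.1}, the inclusion is the induction hypothesis, and the last equality is part (\ref{A.3}). Because each step preserves the parity of $j$, the base case $j=2$ reaches all even $j \geq 4$ and the base case $j=3$ reaches all odd $j \geq 5$; together with the three base cases this covers every $j \in \zp$.

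The one point that needs care, and the only place a hasty argument would break, is that part (\ref{A.3}) of Lemma~\ref{lemA.1}---the identity $B_2^j\cdot\{0^3\} = B_2^{j+2}$---is guaranteed only for $j \geq 2$. The inductive step therefore cannot be used to pass from $j=1$ to $j=3$, which is exactly why $j=3$ must appear as a separate base case and is not derived from $j=1$. Once this bookkeeping is respected, the result is an immediate consequence of Lemma~\ref{lemA.1}, so I anticipate no genuine obstacle beyond it.
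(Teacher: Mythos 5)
Your proof is correct and follows essentially the same route as the paper's: induction on $j$ in steps of two via Lemma~\ref{lemA.1}, the only cosmetic difference being that you run a single induction on the union where the paper runs two separate inductions for $B_0^j$ and $B_1^j$. Your insistence on $j=3$ as an explicit base case is in fact a small improvement, since the paper's inductive step at $j=3$ applies (\ref{A.3}) at index $1$, outside its stated range $2 \leq j$, whereas you verify $j=3$ by direct computation and only invoke (\ref{A.3}) where it is guaranteed.
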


\begin{proof}
We prove the inclusions $B_0^j \subseteq B_2^j$ and $B_1^j \subseteq B_2^j$ by separate inductions on $j$.

By (4.2) and (4.6), $B_0^1 \subseteq B_2^1$. By (4.3), (4.2), and (4.6), $B_0^2 \subseteq B_2^2$. Assume that $B_0^k \subseteq B_2^k$ holds for all $1 \leq k < j$, where $j \geq 3$. Then, by (\ref{lemA.1}), the induction hypothesis, and (\ref{A.3}), we have $$B_0^j = B_0^{j-2}\cdot\{0^3\} \subseteq B_2^{j-2}\cdot\{0^3\} = B_2^j.$$ This completes the proof that $B_0^j \subseteq B_2^j$ holds for all $j \in \zp$.

By (4.4) and (4.6), we have $B_1^1 = B_2^1$ and $B_1^2 \subseteq B^2_2$. Assume that $B^k_1 \subseteq B^k_2$ holds for all $1 \leq k < j$, where $k \geq 3$. Then, by (\ref{lemA.2}), the induction hypothesis, and (\ref{lemA.3}), we have $$B^j_1 = B^{j-2}_1 \cdot \{0^3\} \subseteq B_2^{j-2} \cdot \{0^3\} = B^j_2.$$ This completes the proof that $B_1^j \subseteq B_2^j$ holds for all $j \in \zp$.
\end{proof} \qed

For each set $B \subseteq \{0\}^*$, let $$B/0 = \{y \in \{0\}^* \mid y0 \in B\}.$$

\begin{lemma}\label{lemA.3}
\begin{flalign*}
\quad &1. \text{ For all even } j \in \zp,\ B_1^j \cdot \{0\} \subseteq B_2^{j}.& \tag{A.4} \label{A.4}\\
\quad &2.\text{ For all } 2 \leq j \in \zp,\ B_1^j /0 \subseteq B_2^{j}.& \tag{A.5} \label{A.5}\\
\quad &2.\text{ For all } j \in \zp,\ B_1^j\cdot \{ 0 \} \subseteq B_2^{j}.& \tag{A.6} \label{A.6}\\
\end{flalign*}
\end{lemma}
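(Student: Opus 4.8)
The plan is to treat all three inclusions \eqref{A.4}--\eqref{A.6} as statements about subsets of $\{0\}^*$ and to prove each by induction on $j$, driven by the recursions of Construction~\ref{constr4.4} together with the shift-by-$0^3$ identities of Lemma~\ref{lemA.1} and the containment $B_1^j\subseteq B_2^j$ of Lemma~\ref{lemA.2}. Here $\cdot\{0\}$ appends a single $0$ and $B/0$ deletes a trailing $0$, so each operation is a length shift by one; the whole argument reduces to checking that these one-step shifts are absorbed by the structure of the $B_m^j$.

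For the first inclusion \eqref{A.4} I would induct on even $j$. The base case $j=2$ is immediate from \eqref{eq4.4} and \eqref{eq4.6}, since $B_1^2\cdot\{0\}=\{0^3\}\subseteq B_2^2$. For the step with $j\geq 4$ even, the even branch of \eqref{eq4.5} gives $B_1^j=B_1^{j-2}\cdot\{0^3\}$, so $B_1^j\cdot\{0\}=\bigl(B_1^{j-2}\cdot\{0\}\bigr)\cdot\{0^3\}$; the induction hypothesis at $j-2$ bounds the parenthesised factor by $B_2^{j-2}$, and the shift identity \eqref{A.3} turns $B_2^{j-2}\cdot\{0^3\}$ into $B_2^j$. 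This case is clean because $B_1$'s even recursion and $B_2$'s shift identity both advance the index by exactly two.

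For the second inclusion \eqref{A.5} I would again induct on $j$, now splitting on parity because the two families obey different recurrences. The cases $j=2,3$ are checked directly. For even $j\geq 4$, applying $/0$ to $B_1^j=B_1^{j-2}\cdot\{0^3\}$ gives $B_1^{j-2}\cdot\{0^2\}$; rewriting the target as $B_2^j=B_2^{j-2}\cdot\{0^3\}$ by \eqref{A.3} and cancelling the common trailing $0^2$, the claim becomes $B_1^{j-2}\subseteq B_2^{j-2}\cdot\{0\}$, which is the induction hypothesis $B_1^{j-2}/0\subseteq B_2^{j-2}$ (legitimate since $\lambda\notin B_1^{j-2}$ for $j-2\geq 2$). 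For odd $j\geq 3$, applying $/0$ to $B_1^j=B_1^{j-1}\cdot\{0,0^2\}$ yields $B_1^{j-1}\cup B_1^{j-1}\cdot\{0\}$, while $B_2^j=B_2^{j-1}\cdot\{0\}$ by \eqref{eq4.7}; the summand $B_1^{j-1}\cdot\{0\}$ sits inside $B_2^{j-1}\cdot\{0\}$ by Lemma~\ref{lemA.2}, and $B_1^{j-1}$ itself lies in $B_2^{j-1}\cdot\{0\}$ by the induction hypothesis $B_1^{j-1}/0\subseteq B_2^{j-1}$ (again using $\lambda\notin B_1^{j-1}$).

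Finally, the third inclusion \eqref{A.6} is the consolidated $\cdot\{0\}$ statement: its even-index instances are precisely \eqref{A.4}, and for the remaining indices I would substitute the odd branch of \eqref{eq4.5} for $B_1^j$ and try to absorb the resulting trailing factors into $B_2^j=B_2^{j-1}\cdot\{0\}$ using the inclusions already established in \eqref{A.4}, \eqref{A.5} and Lemma~\ref{lemA.2}. I expect this last reduction to be the main obstacle, and it is the source of the difficulty throughout the lemma: $B_1$ advances its index in steps of two on even $j$ but one on odd $j$, whereas $B_2$ always advances by one, so the two inductions fall out of phase, and one must track with care how $/0$ and $\cdot\{0\}$ interact with the trailing factors $\{0^3\}$, $\{0,0^2\}$, $\{0^2\}$ and $\{0\}$---in particular checking that each cancellation is legitimate (no $\lambda$ in the relevant $B_1^k$) and that the even and odd base cases are aligned with the parity on which each recursion fires.
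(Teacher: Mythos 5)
Your treatment of (A.4) and (A.5) is correct. For (A.4) you follow essentially the paper's argument: base case $j=2$, then the even branch of the recursion for $B_1$ together with the shift identity (A.3). For (A.5) you take a genuinely different route in the inductive step: the paper uses the identity $B_1^j=B_1^{j-2}\cdot\{0^3\}$ (valid for all $j\ge 4$ by (A.2)) to treat even and odd $j$ uniformly, with base cases $j=2,3$, whereas you split on parity and, in the odd case, work directly from $B_1^j=B_1^{j-1}\cdot\{0,0^2\}$, decomposing $B_1^j/0$ as $B_1^{j-1}\cup B_1^{j-1}\cdot\{0\}$ and handling the two pieces with Lemma A.2 and the induction hypothesis respectively. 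Both routes are sound, and your explicit check that $\lambda\notin B_1^k$ for $k\ge 2$ before converting $B_1^k/0\subseteq B_2^k$ into $B_1^k\subseteq B_2^k\cdot\{0\}$ is exactly the point that needs care; the paper's version is slightly shorter because it avoids the parity split.

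The obstacle you anticipated in part 3 is real, and it is not surmountable: statement (A.6) as printed is false. Take $j=3$: $B_1^3=B_1^2\cdot\{0,0^2\}=\{0^3,0^4\}$, so $B_1^3\cdot\{0\}=\{0^4,0^5\}$, while $B_2^3=B_2^2\cdot\{0\}=\{0^2,0^3,0^4\}$, which does not contain $0^5$; already $j=1$ fails, since $B_1^1\cdot\{0\}=\{0,0^2\}\not\subseteq\{\lambda,0\}=B_2^1$. The subscript in (A.6) is a typo. The intended claim---the one the paper actually proves in part 3 of this lemma, and the one invoked in Case 3 of the proof of Lemma 4.5, where $B_0^{j+2}=B_0^j\cdot\{0\}\cdot\{0^2\}\subseteq B_2^j\cdot\{0^2\}$ is needed---is that $B_0^j\cdot\{0\}\subseteq B_2^j$ for all $j\in\zp$. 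With that correction your template goes through unchanged: $B_0^1\cdot\{0\}=\{0\}\subseteq B_2^1$ and $B_0^2\cdot\{0\}=\{0^2,0^3\}\subseteq B_2^2$ by inspection, and for $j\ge 3$ the identity (A.1) gives
\begin{equation*}
B_0^j\cdot\{0\}=B_0^{j-2}\cdot\{0^4\}=\bigl(B_0^{j-2}\cdot\{0\}\bigr)\cdot\{0^3\}\subseteq B_2^{j-2}\cdot\{0^3\}\subseteq B_2^j,
\end{equation*}
using the induction hypothesis and (A.3) (for $j=3$ one checks $B_2^1\cdot\{0^3\}=\{0^3,0^4\}\subseteq B_2^3$ directly, since (A.3) is stated only for indices at least $2$).
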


\begin{proof}
All three parts of this proof use induction on $j$.

1. By (4.4) and (4.6), $B_1^2 \cdot \{0\} \subseteq B_2^2$. Assume that  $B_1^j \cdot \{0\} \subseteq B_2^j,$ where $j \in \zp$ is even. Then (\ref{lemA.2}), the induction hypothesis, and (\ref{lemA.3}) tell us that

\begin{equation*}
\begin{split}
B_1^{j+2} \cdot \{0\} & =  B_1^j \cdot \{0^4\}\\
 & \subseteq B_2^j \cdot \{0^3\}\\
 &= B_2^{j+2}.
\end{split}
\end{equation*}

2. By (4.4) and (4.6), $B_1^2/ 0 \subseteq B_2^2$. By (4.5), (4.6), and (4.7),

\begin{equation*}
\begin{split}
B_1^3 / 0 & = B_1^2\cdot\{\lambda,0\} \\
 & \subseteq B_2^2 \cdot \{0\}\\
 & \subseteq B_2^3.
\end{split}
\end{equation*}
Assume that $B_1^k/0 \subseteq B_2^k$ holds for all $2 \leq k < j$, where $j \geq 4$. Then (\ref{lemA.2}), the induction hypothesis, and (\ref{lemA.3}) tell us that

\begin{equation*}
\begin{split}
B_1^j/0 & = (B_1^{j-2}\cdot\{0^3\})/0\\
 & = (B_1^{j-2}/0)\cdot\{0\}\\
 &\subseteq B_2^{j-2}\cdot\{0^3\}\\
 &=B_2^j.
\end{split}
\end{equation*}

3. By (4.2) and (4.6), $B_0^1 \cdot \{0\} = B_2^1.$ By (4.3), (4.2), and (4.6), $B_0^2 \cdot \{0\} \subseteq B_2^2$. Assume that $B_0^k \cdot \{0\} \subseteq B_2^k$ holds for all $1 \leq k < j$, where $j \geq 3$. Then (\ref{lemA.1}), the induction hypothesis, and (\ref{lemA.3}) tell us that $$B_0^j\cdot\{0\} = B_0^{j-2}\cdot\{0^4\} \subseteq B_2^{j-2}\cdot\{0^3\} = B_2^j.$$ \end{proof}\qed

With the above results on Construction \ref{constr4.4} in hand, we return to the set $A$ of Construction $\ref{constr4.1}$.

Since $A$ is an infinite subset of $\{0\}^*$, every string in $\{0\}^*$ has infinitely many $A$-extensions. This implies that $|\aj_{0^n}| = 1$ for all $n \in \N$ and $j \in \zp$. For each such $n$ and $j$, then, let $a_n^{(j)}$ be the (unique) element of $\aj_{0^n}$, i.e., the $j^{\text{th}}$ $A$-extension of $0^n$. Note that, for all $j \in \zp$,

\begin{equation}
    A^{(j)} = \{a_n^{(j)} \mid n \in \N\}. \tag{A.7}\label{A.7}
\end{equation}

Recall the set $I'$ of Construction $\ref{constr4.1}$.
\begin{notation} For each $n \in \N$ with $n \geq 3$, let $$\Delta(n) = \{k \in I' \mid n-3 \leq k < n\}$$ and $$\delta(n) = |\Delta(n)|,$$ noting that $\delta(n) \in \{1,2,3\}$ in any case.
\end{notation}

The following holds by routine inspection of Construction \ref{constr4.1}.

\begin{observation}\label{obsA.4}
Let $3 \leq n \in \N$ and $j \in \zp$.

\begin{flalign}
\quad&1.\text{If } \delta(n) = 1, \text{then } a_n^{(j)}0 = a_{n-1}^{(j)}. &\tag{A.8} \label{eqA.8} \\
\quad&2.\text{If } \delta(n) = 2, \text{then } a_n^{(j)}0^3 = a_{n-3}^{(j+2)}. &\tag{A.9} \label{eqA.9}\\
\quad&3.\text{If } \delta(n) = 3, \text{then } a_n^{(j)}0^2 = a_{n-2}^{(j+2)}. &\tag{A.10} \label{eqA.10}
\end{flalign}

\end{observation}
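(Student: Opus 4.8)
The plan is to reduce all three parts to a single counting principle describing where the elements of $I'$ sit relative to $n$. As already noted just before the observation, $A$ is an infinite subset of $\{0\}^*$, so for every $n \in \N$ and $j \in \zp$ there is a well-defined value $\kappa(n,j) \in I'$, namely the $j^{\text{th}}$ smallest element of $I'$ satisfying $\kappa(n,j) \ge n$; since the $A$-extensions of $0^n$ are exactly the strings $0^{k-n}$ with $k \in I'$ and $k \ge n$, listed in order of length, we have $a_n^{(j)} = 0^{\kappa(n,j)-n}$. Everything then follows from understanding how $\kappa$ changes when the basepoint $n$ is decreased.

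First I would record the \emph{shift principle}: for any $d \in \zp$, if $c = |I' \cap [n-d,n)|$, then $\kappa(n-d,\,j+c) = \kappa(n,j)$ for every $j$, and consequently $a_{n-d}^{(j+c)} = a_n^{(j)}0^d$. The reason is that $I' \cap [n-d,\infty)$ is the disjoint union of the $c$ elements lying in $[n-d,n)$ and the elements that are $\ge n$, with the former all strictly below the latter; hence the $(j+c)^{\text{th}}$ smallest element $\ge n-d$ is the $j^{\text{th}}$ smallest element $\ge n$, giving $\kappa(n-d,j+c)=\kappa(n,j)$, and then $a_{n-d}^{(j+c)} = 0^{\kappa(n,j)-(n-d)} = 0^{\kappa(n,j)-n}\,0^d = a_n^{(j)}0^d$.

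With this principle in hand, each part is just a matter of computing the count $c$ for the appropriate window. Part~2 is immediate: by definition $I' \cap [n-3,n) = \Delta(n)$, so $c = \delta(n) = 2$, and the shift principle with $d=3$ yields $a_n^{(j)}0^3 = a_{n-3}^{(j+2)}$. For Part~3, the hypothesis $\delta(n)=3$ means all of $n-3,n-2,n-1$ lie in $I'$, so in particular $|I' \cap [n-2,n)| = 2$, and the shift principle with $d=2$ gives the claim. For Part~1, I must show that $\delta(n)=1$ forces $n-1 \notin I'$, so that $|I' \cap [n-1,n)| = 0$ and the shift principle with $d=1$ produces $a_n^{(j)}0 = a_{n-1}^{(j)}$.

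The crux—and the only place genuine case analysis is needed—is this last implication, which I would settle by inspecting Construction~\ref{constr4.1} residue by residue. Every multiple of $3$ lies in $I'$, and for each $i$ exactly one of $3i+1,3i+2$ lies in $I'$. A window $\{n-3,n-2,n-1\}$ of three consecutive integers contains exactly one multiple of $3$ (always counted in $\delta$), one integer $\equiv 1$ and one $\equiv 2 \pmod 3$. When $n \equiv 0$ or $n \equiv 1 \pmod 3$, the $\equiv 1$ and $\equiv 2$ members share the same block index $i$, so exactly one of them lies in $I'$ and $\delta(n)=2$; only when $n \equiv 2 \pmod 3$ do they fall in different blocks, making their memberships independent and permitting $\delta(n) \in \{1,2,3\}$. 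In that case $n-2 = 3(i+1)$ is the in-$I'$ multiple of $3$, while $n-3 = 3i+2$ and $n-1 = 3(i+1)+1$; thus $\delta(n)=1$ precisely when $i \in I$ and $i+1 \notin I$, and then $n-1 \notin I'$, exactly as required. The main obstacle is therefore not any single hard step but the care in this inspection to confirm that, in the $\delta(n)=1$ case, the unique window element of $I'$ is $n-2$ and not $n-1$; the uniform shift principle then discharges all three identities.
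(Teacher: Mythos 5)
Your proof is correct. The paper offers no argument for this observation beyond asserting that it ``holds by routine inspection of Construction \ref{constr4.1}''; your shift principle together with the residue-by-residue analysis --- in particular the verification that $\delta(n)=1$ forces $n-1\notin I'$, which is the one point where the inspection is not entirely mechanical --- is a complete and valid realization of exactly that inspection.
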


\begin{lemma}\label{lemA.5} For all $j \in \zp$, \begin{equation}a_0^{(j)} \in B_0^j. \tag{A.11}\label{A.11}\end{equation}
\end{lemma}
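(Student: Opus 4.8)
The plan is to prove (A.11) by strong induction on $j$, matching the two-step recursion (4.3) that defines $B_0^j$ against the way the $j$th smallest element of $I'$ grows. Since $A = \{0^n \mid n \in I'\}$ is a subset of $\{0\}^*$ and $a_0^{(j)}$ is the $j$th $A$-extension of $\lambda$ (hence the $j$th element of $A$ itself), we have $a_0^{(j)} = 0^{n_j}$, where $n_j$ denotes the $j$th smallest element of $I'$. The whole problem thus reduces to understanding how $n_j$ depends on $j$.

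First I would record the block structure of $I'$. For each $k \in \N$, inspection of Construction~\ref{constr4.1} shows that the block $\{3k,\ 3k+1,\ 3k+2\}$ contributes exactly two elements to $I'$: the element $3k$ always, together with $3k+1$ if $k \in I$ and $3k+2$ if $k \notin I$. Consequently the elements of $I'$, listed in increasing order, pair up block by block, so that the $(2k+1)$st element of $I'$ is $3k$ while the $(2k+2)$nd is $3k+1$ or $3k+2$. Two consequences drive the induction: \textbf{(a)} for odd $j \geq 3$ one has $n_j = n_{j-2} + 3$, i.e.\ $a_0^{(j)} = a_0^{(j-2)}\cdot 0^3$; and \textbf{(b)} for even $j \geq 2$ one has $n_j - n_{j-1} \in \{1,2\}$, i.e.\ $a_0^{(j)} \in \{a_0^{(j-1)}\cdot 0,\ a_0^{(j-1)}\cdot 0^2\}$.

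With these in hand the induction closes routinely. The base case $j=1$ is immediate, since $0 \in I'$ forces $a_0^{(1)} = \lambda$, and $\lambda \in \{\lambda\} = B_0^1$ by (4.2). For the inductive step with $j \geq 2$ I split on the parity of $j$. If $j$ is odd, the induction hypothesis gives $a_0^{(j-2)} \in B_0^{j-2}$, so fact (a) together with the odd branch of (4.3) yields $a_0^{(j)} = a_0^{(j-2)}\cdot 0^3 \in B_0^{j-2}\cdot\{0^3\} = B_0^j$. If $j$ is even, the induction hypothesis gives $a_0^{(j-1)} \in B_0^{j-1}$, so fact (b) together with the even branch of (4.3) yields $a_0^{(j)} \in B_0^{j-1}\cdot\{0,0^2\} = B_0^j$.

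The only genuine work is the counting argument behind the block structure and facts (a) and (b); everything after that is a mechanical parity split, and no earlier machinery (e.g.\ Observation~\ref{obsA.4}, which concerns $n \geq 3$) is needed for the case $n = 0$. I expect the main obstacle to be stating the enumeration of $I'$ cleanly — in particular verifying that each block contributes precisely two elements with $3k$ always the smaller — since this is exactly what makes the ``$+3$ every two steps'' and ``$+1$ or $+2$ on the odd-to-even step'' patterns line up with the two branches of (4.3).
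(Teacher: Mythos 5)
Your proposal is correct and follows essentially the same route as the paper: induction on $j$ with a parity split matching the two branches of recursion (4.3), driven by the observation that each block $\{3k,3k+1,3k+2\}$ contributes exactly two elements to $I'$ (so consecutive elements of $A$ differ by $0$ or $0^2$ at odd-to-even steps and by $0^3$ over two steps). The only difference is cosmetic — you make the block-counting explicit and use a single base case $j=1$ with a downward-stepping strong induction, whereas the paper verifies $j\in\{1,2\}$ and steps from $j$ to $j+1$, citing Lemma~\ref{lemA.1} where the odd branch of (4.3) already suffices.
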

\begin{proof}
Note that each $a_0^{(j)}$ is simply the $j^{\text{th}}$ element of $A$. We proceed by induction on $j$.

By (4.2), $a_0^{(1)} = \lambda \in B_0^1$. By (4.3), $a_0^{(2)} \in \{0,0^2\} = B_0^2$. Hence (A.11) holds for $j \in \{1,2\}$. Assume that $a_0^{(k)} \in B_0^k$ holds for all $1 \leq k \leq j$, where $j \geq 2$. It suffices to show that \begin{equation}a_0^{(j+1)} \in B_0^{j+1}.\tag{A.12}\label{A.12}\end{equation}

We have two cases.

\textbf{Case 1}: $j+1$ is even. Then $a_0^{(j+1)}$ is $a_0^{(j)}0$ or $a_0^{(j)}0^2$. Either way the induction hypothesis and (4.3) tell us that $$a_0^{(j+1)} \in B_0^j \cdot \{0,0^2\} = B_0^{j+1},$$ i.e., that (A.12) holds.

\textbf{Case 2}: $j+1$ is odd. Then either $$a_0^{(j+1)} = a_0^{(j)}0 = a_0^{(j-1)}0^3$$ or $$a_0^{(j+1)} = a_0^{(j)}0^2 = a_0^{(j-1)}0^3,$$ so the induction hypothesis and (\ref{lemA.1}) tell us that $$a_0^{(j+1)} \in B_0^{j-1} \cdot \{0^3\} = B_0^{j+1},$$ i.e., that (A.12) holds.  \qed
\end{proof}
{}
\begin{lemma}\label{lemA.6}
For all $j \in \zp$, \begin{equation}
    a_1^{(j)} \in B_1^j. \tag{A.13}\label{A.13}
\end{equation}
\end{lemma}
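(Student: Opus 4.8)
The plan is to prove (\ref{A.13}) by strong induction on $j$, paralleling the proof of Lemma~\ref{lemA.5}, with one new ingredient that accounts for the fact that $0^1$ sits at the very bottom of $\{0\}^*$. Note first that Observation~\ref{obsA.4} is deliberately restricted to $n \geq 3$, so it cannot be invoked for $n = 1$: although $\delta(1) = 1$, the naive reading of the observation would give the \emph{false} relation $a_1^{(j)}0 = a_0^{(j)}$. The correct relation involves an index shift, which I would make explicit as follows. Since $A$ is an infinite subset of $\{0\}^*$, the $A$-extensions of $0^1$, listed in order, are exactly the strings $0^{e-1}$ as $e$ ranges over the elements of $I'$ exceeding $0$. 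Because $0 \in I'$ is the least element of $I'$, the $j^{\text{th}}$ such $e$ is the $(j+1)^{\text{th}}$ element of $I'$, which is precisely what $a_0^{(j+1)}$ records. This yields the key identity
\begin{equation*}
a_1^{(j)}\cdot 0 = a_0^{(j+1)} \qquad \text{for all } j \in \zp. \notag
\end{equation*}

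With this identity in hand, the transitions of $a_1^{(\cdot)}$ follow from the transitions of $a_0^{(\cdot)}$ already established (by inspection of Construction~\ref{constr4.1}) in the proof of Lemma~\ref{lemA.5}. I would split on the parity of $j$. When $j$ is even, $j+1$ is odd, so $a_0^{(j+1)} = a_0^{(j-1)}\cdot 0^3$; substituting $a_0^{(j+1)} = a_1^{(j)}0$ and $a_0^{(j-1)} = a_1^{(j-2)}0$ and cancelling the common trailing $0$ yields $a_1^{(j)} = a_1^{(j-2)}\cdot 0^3$. When $j \geq 3$ is odd, $j+1$ is even, so $a_0^{(j+1)} \in a_0^{(j)}\cdot\{0,0^2\}$; the same substitution and cancellation give $a_1^{(j)} \in a_1^{(j-1)}\cdot\{0,0^2\}$.

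These two transition facts line up exactly with the recursion (4.5) for $B_1^j$, so the induction runs cleanly. For the base cases, the identity above (or direct inspection) gives $a_1^{(1)} \in \{\lambda,0\} = B_1^1$ and $a_1^{(2)} = 0^2 \in B_1^2$, matching (4.4). For the inductive step with $j \geq 3$: if $j$ is even, then using the even transition and the induction hypothesis $a_1^{(j-2)} \in B_1^{j-2}$ we get $a_1^{(j)} = a_1^{(j-2)}0^3 \in B_1^{j-2}\cdot\{0^3\} = B_1^j$ by (4.5); if $j$ is odd, then using the odd transition and the induction hypothesis $a_1^{(j-1)} \in B_1^{j-1}$ we get $a_1^{(j)} \in B_1^{j-1}\cdot\{0,0^2\} = B_1^j$, again by (4.5). (Strong induction is needed because the even case reaches back to index $j-2$.)

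The two inductive cases are immediate once the transitions are known, so the routine calculations are negligible. The main obstacle—and the only genuinely new point compared with Lemma~\ref{lemA.5}—is establishing the boundary identity $a_1^{(j)}0 = a_0^{(j+1)}$ with the correct off-by-one index shift. This shift is exactly why Observation~\ref{obsA.4} excludes $n < 3$: near the start of $\{0\}^*$ the presence of the least element $0 \in I'$ perturbs the counting of extensions, so one cannot appeal to the generic recursion $a_n^{(j)}0 = a_{n-1}^{(j)}$ available for $n \geq 3$. I would therefore verify the identity directly from Construction~\ref{constr4.1}, rather than by analogy with the $n \geq 3$ case.
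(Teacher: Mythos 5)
Your proof is correct and follows essentially the same route as the paper's: induction on $j$ with base cases $j\in\{1,2\}$ and a parity split using the transitions $a_1^{(j)}=a_1^{(j-2)}\cdot 0^3$ ($j>2$ even) and $a_1^{(j)}\in a_1^{(j-1)}\cdot\{0,0^2\}$ ($j>2$ odd), matched against (4.4)--(4.5) (the paper cites (\ref{A.2}) in the even case, which is equivalent here). The only difference is that where the paper justifies these transitions simply ``by Construction~\ref{constr4.1}'', you derive them explicitly from the $a_0$ transitions via the (correct) boundary identity $a_1^{(j)}\cdot 0=a_0^{(j+1)}$ --- an extra level of rigor at the inspection step, not a different approach.
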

\begin{proof}(by induction on $j$). By (\ref{eq4.4}), $a_1^{(1)}\in \{ \lambda, 0 \}= B_1^{1}$ and $a_1^{(2)}=0^2\in B_1^2$. Hence (\ref{A.13}) holds for $j\in \{ 1,2 \}$. Assume that $a_1^{(k)}\in B_1^k$ holds for $1\leq k\leq j$, where $j\geq 2$. It suffices to show that
\begin{equation}
    a_1^{(j+1)} \in B_1^{j+1}. \tag{A.14}\label{A.14}
\end{equation}
We have two cases.

\textbf{Case 1:} $j+1$ is even. Then, by Construction~\ref{constr4.1}, the induction hypothesis, and (\ref{A.2}),
\[
    a_1^{(j+1)}= a_1^{(j-1)}0^3 \in B_1^{j-1}\cdot \{ 0^3 \} = B_1^{j+1}.
\]

\textbf{Case 2:} $j+1$ is odd. Then, by Construction~\ref{constr4.1}, $a_1^{(j+1)}\in \{ a_1^{j}0, a_1^{(j)}0^2 \}$. It follows by the induction hypothesis and (\ref{eq4.5}) that
\[
     a_1^{(j+1)}\in B_1^{j}\cdot \{ 0, 0^2 \} = B_1^{j+1}.
 \]

 In either case, (A.14) holds. \qed
\end{proof}

\begin{lemma}\label{lemA.7}
For all $j \in \zp$, \begin{equation}\label{eqA.15}
B_1^{(j+1)} \subseteq B_2^j\cdot \{ 0 \}. \tag{A.15}
\end{equation}
\end{lemma}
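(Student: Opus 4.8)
The plan is to prove the inclusion (\ref{eqA.15}) directly from the recursions of Construction \ref{constr4.4} together with Lemmas \ref{lemA.2} and \ref{lemA.3}, splitting on the parity of $j$; no induction on $j$ is needed, since in each case the inclusion reduces to one already established. (Throughout I read $B_1^{(j+1)}$ as $B_1^{j+1}$, matching the notation of the recursions.) I first dispose of the base case $j=1$ by direct computation: $B_1^2 = \{0^2\}$ while $B_2^1 \cdot \{0\} = \{0,0^2\}$, so (\ref{eqA.15}) holds.

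For even $j$ (so $j+1$ is odd and exceeds $2$), I use the recursion (\ref{eq4.5}) to write $B_1^{j+1} = B_1^j \cdot \{0,0^2\}$, and then factor $\{0,0^2\} = \{\lambda,0\}\cdot\{0\}$, so that $B_1^{j+1} = \big(B_1^j \cup B_1^j\cdot\{0\}\big)\cdot\{0\}$. Lemma \ref{lemA.2} gives $B_1^j \subseteq B_2^j$, and part 1 of Lemma \ref{lemA.3}, i.e.\ (\ref{A.4}), gives $B_1^j \cdot \{0\} \subseteq B_2^j$ because $j$ is even; hence $B_1^j \cup B_1^j\cdot\{0\} \subseteq B_2^j$, and right-multiplying by $\{0\}$ yields $B_1^{j+1} \subseteq B_2^j\cdot\{0\}$, as required.

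For odd $j \geq 3$ (so $j+1$ is even and exceeds $2$), the recursion (\ref{eq4.5}) instead gives $B_1^{j+1} = B_1^{j-1}\cdot\{0^3\}$, while (\ref{eq4.7}) gives $B_2^j = B_2^{j-1}\cdot\{0\}$ and hence $B_2^j\cdot\{0\} = B_2^{j-1}\cdot\{0^2\}$. Since $j-1$ is even, part 1 of Lemma \ref{lemA.3}, (\ref{A.4}), applies to give $B_1^{j-1}\cdot\{0\} \subseteq B_2^{j-1}$. Right-multiplying this inclusion by $\{0^2\}$ and using $B_1^{j-1}\cdot\{0\}\cdot\{0^2\} = B_1^{j-1}\cdot\{0^3\} = B_1^{j+1}$ gives $B_1^{j+1} \subseteq B_2^{j-1}\cdot\{0^2\} = B_2^j\cdot\{0\}$, completing this case and the proof.

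The only mild subtlety — which I do not expect to be a genuine obstacle — is bookkeeping the parity conditions so that each recursion and each earlier lemma is invoked only in its intended range (for instance, (\ref{eq4.7}) requires $j>2$, which is why the odd case starts at $j=3$ and $j=1$ is handled separately). The substantive content is carried entirely by part 1 of Lemma \ref{lemA.3}; everything else is manipulation of right-concatenation of sets of strings over a one-letter alphabet, which is monotone, so all the factoring and right-multiplication steps are justified.
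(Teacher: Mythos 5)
Your proof is correct and follows essentially the same route as the paper's: a case split on the parity of $j$, with the odd case argued identically (unwinding $B_1^{j+1}=B_1^{j-1}\cdot\{0^3\}$, applying (A.4) to get into $B_2^{j-1}\cdot\{0^2\}$, and using (4.7) to rewrite this as $B_2^{j}\cdot\{0\}$). Your even case is slightly more roundabout than the paper's, which simply combines Lemma A.2 with $B_2^{j+1}=B_2^{j}\cdot\{0\}$ from (4.7), and your explicit base case $j=1$ is a worthwhile addition, since the paper's Case 2 as written would refer to the undefined set $B_1^{0}$ when $j=1$.
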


\begin{proof}
Let $j \in \zp$. We have two cases.\\
\textbf{Case 1}: $j$ is even. Then Lemma \ref{A.2} and (\ref{eq4.7}) tell us that $$B_1^{j+1} \subseteq B_2^{j+1} = B_2^{j} \cdot \{0\},$$ so (\ref{eqA.15}) holds.\\
\textbf{Case 2}: $j$ is odd. Then (\ref{A.2}), (\ref{A.4}), and (\ref{eq4.7}) tell us that

\begin{align*}
 B_1^{j+1} &= B_1^{j-1} \cdot \{0^3\}\\
 &\subseteq B_2^{j-1}\cdot\{0^2\}\\
 &= B_2^j\cdot\{0\}, \notag
\end{align*} so (\ref{eqA.15}) again holds.
\end{proof} \qed

\begin{lemma}\label{lemA.8} For all $j \in \zp$, \begin{equation}\label{eqA.16}
a_2^{(j)} \in B_2^j
\tag{A.16}
\end{equation}
\end{lemma}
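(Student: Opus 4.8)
The plan is to avoid a fresh induction on $j$ (as was used in Lemmas~\ref{lemA.5} and~\ref{lemA.6}) and instead reduce $a_2^{(j)}$ to the already-understood extensions $a_1^{(\cdot)}$, then invoke the structural lemmas relating $B_1$ to $B_2$. The key observation is that the $A$-extensions of $0^2$ are exactly the strings $0^{k-2}$ with $k \in I'$ and $k \geq 2$, listed in increasing order of $k$, and similarly the extensions of $0^1$ are the strings $0^{k-1}$ with $k \in I'$ and $k \geq 1$. The lists $I' \cap [2,\infty)$ and $I' \cap [1,\infty)$ differ only in whether $1$ is present, which by Construction~\ref{constr4.1} is governed entirely by whether $0 \in I$. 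This dichotomy is what I would use to split the argument.

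First I would record, by routine inspection of Construction~\ref{constr4.1} (exactly in the spirit of Observation~\ref{obsA.4}), that for every $j \in \zp$,
\[
a_2^{(j)} \cdot 0 =
\begin{cases}
a_1^{(j+1)} & \text{if } 0 \in I,\\
a_1^{(j)} & \text{if } 0 \notin I.
\end{cases}
\]
Indeed, when $0 \in I$ we have $1 \in I'$, an extension available to $0^1$ but not to $0^2$, so the $j^{\text{th}}$ extension of $0^2$ matches the $(j+1)^{\text{th}}$ extension of $0^1$; when $0 \notin I$ we have $1 \notin I'$, the two extension lists coincide, and the indices agree. In either case appending a single $0$ compensates for the one-symbol-longer prefix $0^2$ versus $0^1$.

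With this identity in hand the two cases close quickly. If $0 \in I$, then by Lemma~\ref{lemA.6} and Lemma~\ref{lemA.7} we have $a_2^{(j)} \cdot 0 = a_1^{(j+1)} \in B_1^{j+1} \subseteq B_2^j \cdot \{0\}$; since every string here lies in $\{0\}^*$, cancelling the trailing $0$ yields $a_2^{(j)} \in B_2^j$ for all $j$. If $0 \notin I$, then $a_2^{(j)} \cdot 0 = a_1^{(j)} \in B_1^j$ by Lemma~\ref{lemA.6}, so $a_2^{(j)} \in B_1^j / 0$; for $j \geq 2$ the inclusion (\ref{A.5}) of Lemma~\ref{lemA.3} gives $B_1^j / 0 \subseteq B_2^j$, and for the single remaining index $j = 1$ a direct computation shows $a_2^{(1)} = \lambda \in B_2^1$ by (\ref{eq4.6}).

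I expect the only delicate step to be establishing the displayed identity: one must check precisely how the ordered list of $A$-extensions of $0^2$ arises from that of $0^1$ --- by deleting the extension corresponding to $1 \in I'$ when $0 \in I$, or by no deletion at all when $0 \notin I$ --- and confirm that appending one $0$ exactly accounts for the longer prefix in both cases. Everything afterward is a routine appeal to the previously proved inclusions, together with the harmless cancellation of a trailing $0$ in $\{0\}^*$ and the trivial $j = 1$ check.
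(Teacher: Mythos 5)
Your proposal is correct and follows essentially the same route as the paper: both reduce $a_2^{(j)}0$ to an element of $B_1^{j}$ or $B_1^{j+1}$ (your explicit dichotomy on whether $0\in I$ is exactly the paper's ``$a_2^{(j+1)}0 \in B_1^{j+1}\cup B_1^{j+2}$'' step) and then close via Lemma~\ref{lemA.6}, inclusion (\ref{A.5}) of Lemma~\ref{lemA.3}, and Lemma~\ref{lemA.7}. The only difference is presentational: the paper wraps this in an induction on $j$ whose hypothesis is never actually used, whereas you dispense with the induction and handle the lone $j=1$ exception to (\ref{A.5}) by direct computation.
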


\begin{proof}(by induction on $j$). By (\ref{eq4.6}), $a_2^{(1)} \in \{\lambda, 0\} = B_2^1$ and $a_2^{(2)} \in \{0,0^2,0^3\} = B_2^2$. Hence (\ref{eqA.16}) holds for $j \in \{1,2\}$.

Assume that $a_2^{(k)} \in B_2^k$ holds fo $1 \leq k < j$, where $j \geq 2$. It suffices to show that \begin{equation}\label{eqA.17} a_2^{(j+1)} \in B_2^{j+1} \tag{A.17}\end{equation}

By Constructions \ref{constr4.1} and \ref{eq4.4}, we have $a_2^{(j+1)}0 \in B_1^{j+1} \cup B_1^{j+2}$, whence \begin{equation}\label{eqA.18}
a^{(j+1)} \in (B_1^{j+1}/0) \cup (B_1^{j+2}/0).
\tag{A.18}
\end{equation}
By Lemma \ref{A.3}, \begin{equation}\label{eqA.19} B_1^{j+1}/0 \subseteq B_2^{j+1}. \tag{A.19}\end{equation}
By Lemma \ref{A.7}\begin{equation}\label{eqA.20} B_1^{j+2}/0 \subseteq B_2^{j+1}. \tag{A.20}\end{equation}
By (\ref{eqA.18}),(\ref{eqA.19}), and (\ref{eqA.20}), we have (\ref{eqA.17}).
\end{proof} \qed

\textbf{Proof of Lemma 4.5}
Our main objective is to prove that, for all $j \in \zp$ and $n \in \N$, \begin{equation}\label{eqA.21}
    a_n^{(j)} \in B_{n \mybmod 3}^j. \tag{A.21}
\end{equation} This suffices, because by (\ref{A.7}), Lemma \ref{A.2}, and (\ref{eq4.8}), it implies that \begin{align*}
 \aj &= \{a_n^{(j)} \mid n \in \N\}\\
 &\subseteq \mcup_{n=0}^\infty B_{n \mybmod 3}^j\\
 &= B_0^j \cup B_1^j \cup B_2^j\\
 &= B_2^j\\
 &= B^{(j)}, \notag
\end{align*}
affirming the lemma.

Let $j \in \zp$. We prove (\ref{eqA.21}) by induction on $n$. Lemmas \ref{A.5}, \ref{A.6}, and \ref{eqA.8} tell us that (\ref{eqA.21}) holds for $n\in \{0,1,2\}$.

Assume that $a_m^{(j)} \in B^j_{m \mybmod 3}$ holds for all $0 \leq m < n$, whence $n \geq 3$. It suffices to show that \begin{equation}\label{eqA.22}
a_n^{(j)} \in B_{n \mybmod 3}^j.
\tag{A.22}\end{equation}
We have three cases.

\textbf{Case 1}: $\delta(n) = 1$. Then $n \equiv 2\pmod 3$, so (\ref{eqA.8}), the induction hypothesis, and (\ref{A.5}) tell us that

\begin{align*}
 a^{(j)}_n0 &= a_{n-1}^{(j)} \\
 &\in B_{(n-1)\mybmod 3}^j\\
 &= B_1^j\\
 &= (B_1^j/0)\cdot\{0\}\\
 &\subseteq B_2^{j}\cdot\{0\}, \end{align*}
 whence $a^{(j)}_n \in B^j_2 = B^{j}_{n \mybmod 3}.$

 \textbf{Case 2:} $\delta(n) = 2$. Then (\ref{eqA.9}), the induction hypothesis, and (\ref{A.3}) tell us that

 \begin{align*}
 a^{(j)}_n0^3 &= a_{n-3}^{(j+2)} \\
 &\in B_{(n-3) \mybmod 3}^{j+2}\\
 &= B_{n \mybmod 3}^{j+2}\\
 &= B_{n \mybmod 3}^j\cdot\{0^3\}, \end{align*} whence $a_n^{(j)} \in B_{n \mybmod 3}^{j}$.

 \textbf{Case 3}: $\delta(n) = 3$. Then (\ref{eqA.10}), the induction hypothesis, (\ref{A.2}), and (\ref{A.6}) tell us that

 \begin{align*}
 a^{(j)}_n0^2 &= a_{n-2}^{(j+2)} \\
 &\in B_{(n-2) \mybmod 3}^{j+2}\\
 &= B_0^{j+2}\\
 &= (B_0^j)\cdot\{0\}\cdot\{0^2\}\\
 &\subseteq B_2^{j}\cdot\{0^2\}, \end{align*}
 whence $a_n^{(j)} \in B_2^j = B^j_{n \mybmod 3}$.

 In any case, then, (\ref{eqA.22}) holds. \qed

\end{document}